\newenvironment{namedproof}[1]{\paragraph{Proof of #1.}\hspace{-1em}}{\hfill\qed}
\newcommand{\nc}{\newcommand}
\nc{\DMO}{\DeclareMathOperator}
\definecolor{nRed}{RGB}{190,38,38}
\nc{\dhruv}[1]{\ifnum\Comments=1 {\color{brown!80!black} [dr: #1]}\fi}
\nc{\manolis}[1]{\ifnum\Comments=1 {\color{nRed} #1}\fi}
\nc{\manolisc}[1]{\ifnum\Comments=1 {\color{nRed}[\textbf{Manolis:} #1]}\fi}
\newcommand\restr[2]{{
  \left.\kern-\nulldelimiterspace 
  #1 
  \vphantom{\big|} 
  \right|_{#2} 
  }}
\newtheorem{theorem}{Theorem}[section]
\newtheorem{lemma}[theorem]{Lemma}
\newtheorem{proposition}[theorem]{Proposition} 
\newtheorem{corollary}[theorem]{Corollary} 
\theoremstyle{definition}
\newtheorem{definition}[theorem]{Definition}
\newtheorem{example}[theorem]{Example}
\theoremstyle{plain}
\newcommand{\norm}[1]{\left \lVert #1 \right \rVert}
\mathchardef\mhyphen="2D
\newtheoremstyle{break}
  {\topsep}{\topsep}%
  {}{}%
  {\bfseries}{}%
  {\newline}{}%
\theoremstyle{break}
\newcommand{\RR}{\mathbb{R}}
\DeclareMathOperator{\diag}{diag}
\newcommand{\EE}{\mathbb{E}}
\newcommand{\CB}{\mathcal{B}}
\DeclareMathOperator{\Vol}{Vol}
\DeclareMathOperator{\vspan}{span}
\newcommand{\tx}{\tilde{x}}
\newcommand{\ty}{\tilde{y}}
\newcommand{\relu}{\textsf{ReLU}}
\DeclareMathOperator{\vrank}{rank}
\newcommand{\Gradient}{\nabla}
\newcommand{\CH}{\mathcal{H}}
\DeclareMathOperator{\Unif}{Unif}
\theoremstyle{remark}
\newtheorem*{remark}{Remark}
\title{Constant-Expansion Suffices for Compressed Sensing with Generative Priors}
\author{Constantinos Daskalakis \\ MIT \\ \texttt{costis@mit.edu} \and Dhruv Rohatgi \\ MIT \\ \texttt{drohatgi@mit.edu} \and Manolis Zampetakis \\ MIT \\ \texttt{mzampet@mit.edu}}
\begin{document}

\maketitle

\begin{abstract}
    Generative neural networks have been empirically found very promising in providing effective structural priors for compressed sensing, since they can be trained to span low-dimensional data manifolds in high-dimensional signal spaces. Despite the non-convexity of the resulting optimization problem, it has also been shown theoretically that, for neural networks with random Gaussian weights, a signal in the range of the network can be efficiently, approximately recovered from a few noisy measurements. However, a major bottleneck of these theoretical guarantees is a network \emph{expansivity} condition: that each layer of the neural network must be larger than the previous by a logarithmic factor. Our main contribution is to break this strong expansivity assumption, showing that \emph{constant} expansivity suffices to get efficient recovery algorithms, besides it also being information-theoretically necessary. To overcome the theoretical bottleneck in existing approaches we prove a novel uniform concentration theorem for random functions that might not be Lipschitz but satisfy a relaxed notion which we call ``pseudo-Lipschitzness.'' Using this theorem we can show that a matrix concentration inequality known as the \emph{Weight Distribution Condition (WDC)}, which was previously only known to hold for Gaussian matrices with logarithmic aspect ratio, in fact holds for constant aspect ratios too. Since the WDC is a fundamental matrix concentration inequality in the heart of all existing theoretical guarantees on this problem, our tighter bound immediately yields improvements in all known results in the literature on compressed sensing with deep generative priors, including one-bit recovery, phase retrieval, low-rank matrix recovery and more.
    
\end{abstract}

\section{Introduction}

Compressed sensing is the study of recovering a high-dimensional signal from as few measurements as possible, under some structural assumption about the signal that  pins it into a low-dimensional {subset} of the signal space. The assumption that has driven the most research is sparsity; it is well known that a $k$-sparse signal from $\RR^n$ can be efficiently recovered from only $O(k \log n)$ linear measurements~\cite{CRT06}. Numerous variants of this problem have been studied, e.g.~tolerating measurement noise, recovering signals that are only approximately sparse, recovering signals from phaseless measurements or one-bit measurements, to name just a few~\cite{BCKV15}.

However, in many applications sparsity in some basis may not be the most natural structural assumption to make for the signal to be reconstructed. Given recent strides in performance of generative neural networks \cite{GAN14, DCF15, KALL17, BDS18, KD18}, there is strong evidence that data from some domain ${\cal D}$, e.g.~faces, can be used to identify a deep neural network $G:\RR^k \rightarrow \RR^n$, where $k\ll n$, whose range $G(x)$ over varying ``latent codes'' $x \in \RR^k$ covers well the objects of ${\cal D}$. Thus, if we want to perform compressed sensing on signals from this domain ${\cal D}$, the machine learning paradigm suggests that a reasonable structural assumption to make is that the signal lies in the range of $G$, suggesting the following problem, first proposed in~\cite{BJPD17}:



\begin{figure}[ht]

\framebox{
\begin{minipage}{0.97\textwidth}
{\sc Compressed Sensing with a Deep Generative Prior (CS-DGP)}\\
{\bf Given:} Deep neural network $G: \RR^k \to \RR^n$; measurement matrix $A \in \RR^{m \times n}$, where $m \ll n$.\\
{\bf Given:} $y = AG(x^*) + e \in \RR^m$, for some {\em unknown} latent vector $x^* \in \RR^k$, noise vector $e \in \RR^m.$

{\bf Goal:} Recover $x^*$ (or in a different variant of the problem just $G(x^*)).$
\end{minipage}}

\end{figure}



It has been shown empirically that this problem (and some variants of it) can be solved efficiently~\cite{BJPD17}. It has also been shown empirically that the quality of the reconstructed signals in the low number of measurements regime might greatly outperform those reconstructed using a sparsity assumption. It has even been shown that the network $G$ need not be trained on data from the domain of interest ${\cal D}$ but that a convolutional neural network $G$ with random weights might suffice to regularize the reconstruction well~\cite{ulyanov2018deep,van2018compressed}.

Despite the non-convexity of the optimization problem, some theoretical guarantees have also emerged \cite{HV19, HHHV18}, when $G$ is a fully-connected ReLU neural network of the following form (where $d$ is the depth):
\begin{align}
    G(x) = \relu(W^{(d)}(\dots \relu(W^{(2)}(\relu(W^{(1)}x))) \dots )), \label{eq:ReLU network}
\end{align} where each $W^{(i)}$ is a matrix of dimension $n_i \times n_{i-1}$, such that $n_0 = k$ and $n_d = n$. These theoretical guarantees mirror well-known results in sparsity-based compressed sensing, where efficient recovery is possible if the measurement matrix $A$ satisfies a certain deterministic condition, e.g. the Restricted Isometry Property. But for arbitrary $G$, recovery is in general intractable~\cite{LJDD19}, so some assumption about $G$ must also be made. Specifically, it has been shown in prior work that, if the measurement matrix $A$ satisfies a certain \emph{Range Restricted Isometry Condition (RRIC)} with respect to $G$, and each weight matrix $W^{(i)}$ satisfies a \emph{Weight Distribution Condition (WDC)}, then $x^*$ can be efficiently recovered up to error roughly $||e||$ from $O(k \cdot \log(n) \cdot \text{poly}(d))$ measurements~\cite{HV19, HHHV18}. See Section~\ref{sec:wdc} for a definition of the WDC, and Appendix~\ref{sec:compressed} for a definition of the RRIC.


But it's critical to understand when these conditions are satisfied (for example, in the sparsity setting, the Restricted Isometry Property is satisfied by i.i.d. Gaussian matrices when $m \geq k\log n$). Similarly, the RRIC has been shown to hold when $A$ is i.i.d. Gaussian and $m \geq k \cdot \log(n) \cdot \text{poly}(d)$, which is an essentially optimal measurement complexity if $d$ is constant. However, until this work, the WDC has seemed more onerous. Under the assumption that each $W^{(i)}$ has i.i.d. Gaussian entries, the WDC was previously only known to hold when $n_i \geq c n_{i-1} \log n_{i-1}$: i.e. when every layer of the neural network is larger than the previous by a logarithmic factor. This \emph{expansivity} condition is a major limitation of the prior theory, since in practice neural networks do not expand at every layer.


Our work alleviates this limitation, settling a problem left open in~\cite{HV19, HHHV18} and recently also posed in survey~\cite{ongie2020deep}. We show that the WDC holds when $n_i \geq cn_{i-1}$. This proves the following result, where our contribution is to replace $n_i \geq n_{i-1} \cdot \log(n_{i-1}) \cdot \mathrm{poly}(d)$ with $n_i \geq n_{i-1} \cdot \mathrm{poly}(d)$.

\begin{theorem}\label{thm:main-compressed}
    Suppose that each weight matrix has expansion $n_i \geq n_{i-1} \cdot \mathrm{poly}(d)$, and the number of measurements is $m \geq k \cdot \log(n) \cdot \mathrm{poly}(d)$. Suppose that $A$ has i.i.d. Gaussian entries $N(0,1/m)$ and each $W^{(i)}$ has i.i.d. Gaussian entries $N(0,1/n_i)$. Then there is an efficient gradient-descent based algorithm
  which given $G$, $A$, and $y = A \cdot G(x^*) + e$,
  outputs, with probability at least $1 - e^{-k/\mathrm{poly}(d)}$, an estimate $\tilde{x} \in \RR^k$ satisfying $\norm{\tilde{x} - x^*} \leq O(2^d \norm{e})$ when $\norm{e}$ is sufficiently small.
\end{theorem}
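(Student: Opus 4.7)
The plan is to reduce Theorem~\ref{thm:main-compressed} to two deterministic conditions on the weight matrices $W^{(i)}$ and the measurement matrix $A$: the Weight Distribution Condition (WDC) and the Range Restricted Isometry Condition (RRIC). Prior work of Hand--Voroninski and Hand--Huang--Hand--Voroninski has shown that whenever both conditions hold with sufficiently small constants, a gradient-descent-style algorithm on the empirical loss $\tfrac{1}{2}\|AG(x)-y\|^2$ deterministically recovers $x^{*}$ up to error $O(2^{d}\|e\|)$. I would first invoke this black-box deterministic reduction, and then focus entirely on verifying that both conditions hold with the stated probability under the Gaussian model, now with the expansivity assumption relaxed from logarithmic to constant per layer.

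The RRIC is the easier half. For $A\sim N(0,1/m)^{m\times n}$ with $m\geq k\log(n)\cdot \mathrm{poly}(d)$, the RRIC has been established in prior work via an $\epsilon$-net argument over the (piecewise linear, effectively $k$-dimensional) range of $G$, combined with standard Gaussian concentration for linear forms. I would invoke that bound as-is, since it depends only on the measurement dimension $m$ and not on the per-layer expansivity.

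The main technical obstacle is the WDC. The classical proof fixes unit vectors $x,y$, applies Bernstein-type concentration to the random matrix $\tfrac{1}{n_i}\sum_{j} W^{(i)}_{j}(W^{(i)}_{j})^{\top}\mathbbm{1}\{W^{(i)}_{j}\cdot x>0,\; W^{(i)}_{j}\cdot y>0\}$, and then places an $\epsilon$-net on the sphere and pays a Lipschitz constant polynomial in $n_{i-1}$ to lift pointwise concentration to a uniform statement. That last step is precisely where the parasitic $\log n_{i-1}$ factor enters the row requirement. My plan is to replace the Lipschitz lifting with the paper's advertised uniform concentration theorem for \emph{pseudo-Lipschitz} random functions: the summand is not Lipschitz in $(x,y)$ because of the indicator discontinuities, but the set of $W^{(i)}_{j}$ on which the summand is discontinuous has small Gaussian measure, so a relaxed stability-in-probability notion still holds. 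Applying this pseudo-Lipschitz uniform concentration to the WDC summand should give the WDC under $n_{i}\geq C\, n_{i-1}$, removing the log factor. The hardest part, by far, is isolating the correct notion of pseudo-Lipschitzness and proving the abstract uniform concentration theorem itself; once that machinery is in hand, the WDC application reuses the same $\epsilon$-net skeleton as the prior argument.

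To finish, I would union-bound the WDC failure event across the $d$ layers (each succeeding with probability at least $1-e^{-n_{i}/\mathrm{poly}(d)}$, hence at least $1-e^{-k/\mathrm{poly}(d)}$ after absorbing the factor of $d$) and combine it with the RRIC failure event, then feed the resulting deterministic conditions into the recovery guarantee to conclude the claimed $O(2^{d}\|e\|)$ error bound with the stated success probability.
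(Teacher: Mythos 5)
Your proposal is correct and follows essentially the same route as the paper: reduce to the deterministic WDC and RRIC conditions, feed them into the black-box recovery theorem of Huang--Hand--Heckel--Voroninski (Theorem~\ref{thm:det}), reuse the existing RRIC argument (Lemma~17 of Hand--Voroninski, which is expansivity-free), and spend the real effort showing that the WDC holds with constant expansion via the pseudo-Lipschitz uniform concentration theorem, then union-bound over the $d$ layers. One small note on mechanism: the paper does not treat the pseudo-Lipschitzness as arising because ``the set of rows on which the summand is discontinuous has small Gaussian measure''; rather, it first replaces the indicator with continuous upper/lower approximations $h_{\pm\epsilon}$ (as in Hand--Voroninski), and the pseudo-Lipschitzness then captures that the resulting continuous function $u^{T}G_{W,\pm\epsilon}(x,y)u$, though $\Theta(\sqrt{k})$-Lipschitz in the worst direction, has $O(\epsilon)$-deviation within a parameter-dependent convex pseudo-ball $B_{W,t,u}$ of non-negligible volume---but since you explicitly defer to the paper's abstract concentration theorem, this does not affect the validity of your plan.
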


We note that the dependence in $d$ of the expansivity, number of measurements, and error in our theorem is the same as in~\cite{HHHV18}. Moreover, the techniques developed in this paper yield several generalizations of the above theorem, stated informally below, and discussed further in Section~\ref{app:extensions}.

\begin{theorem}\label{thm:main-extensions}
Suppose $G$ is a random neural network with constant expansion, and conditions analogous to those of Theorem~\ref{thm:main-compressed} are satisfied. Then the following results also hold with high probability. The Gaussian noise setting admits an efficient algorithm with recovery error $\Theta(k/m)$. Phase retrieval and one-bit recovery with generative prior $G$ have no spurious local minima. And compressed sensing with a two-layer deconvolutional prior has no spurious local minima.
\end{theorem}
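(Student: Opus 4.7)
The plan is to derive each of the four extensions as a corollary of an existing deterministic theorem in the CS-DGP literature whose only probabilistic ingredients are (i) the Weight Distribution Condition on each weight matrix $W^{(i)}$ and (ii) a range-restricted concentration/isometry condition on the measurement map. Ingredient (ii) is already known to hold for i.i.d.\ Gaussian $A$ at measurement complexity $m \geq k \cdot \log(n) \cdot \mathrm{poly}(d)$, and is the same RRIC invoked in Theorem~\ref{thm:main-compressed}; ingredient (i) is precisely what our paper's main technical theorem supplies, now at \emph{constant} rather than logarithmic expansion. Consequently, for each item in the statement, the expansivity requirement of the corresponding prior result drops from $n_i \gtrsim n_{i-1} \log n_{i-1}$ to $n_i \gtrsim n_{i-1} \cdot \mathrm{poly}(d)$ by pure substitution, while the rest of the proof is untouched.

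For the $\Theta(k/m)$ Gaussian-noise rate, I would follow the template used to sharpen the worst-case $O(2^d \norm{e})$ bound of Theorem~\ref{thm:main-compressed} into an in-expectation rate: under WDC the map $G$ is piecewise linear on its activation patterns with each piece being an approximate isometry between a $k$-dimensional subspace and its image, so the Gaussian complexity of the effective hypothesis class is $O(k)$, and a chaining (equivalently, direct variance) bound on the RRIC residual yields the parametric $k/m$ rate. For phase retrieval and one-bit recovery, I would invoke the landscape analyses of \cite{HHHV18, HV19} as black boxes: each deduces the absence of spurious local minima from WDC plus a range-restricted concentration condition on the phaseless (resp.\ sign) measurement map, and both ingredients transfer verbatim at constant expansion. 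For the two-layer deconvolutional prior, the analogue of WDC is a bound for random \emph{convolutional} weight matrices; to port it to constant expansion I would re-run our pseudo-Lipschitz uniform concentration argument in the circulant Gaussian setting, replacing row-wise independence by a Hanson--Wright decoupling of the quadratic form defining the WDC residual.

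The main obstacle is precisely this last step. The pseudo-Lipschitz concentration theorem underlying our WDC is stated for matrices with independent Gaussian entries, whereas a deconvolutional layer has only $O(n_i)$ independent Gaussians spread circulantly over $n_i n_{i-1}$ entries, so the row-wise Bernstein step breaks down and must be replaced by a matrix Hanson--Wright bound for quadratic forms in dependent Gaussians. This is the one non-mechanical adaptation. Everything else is bookkeeping: union-bounding WDC across the $d$ layers preserves the $1 - e^{-k/\mathrm{poly}(d)}$ success probability, and the constants in the cited landscape theorems absorb the relaxation from $\log n_{i-1}$ to $\mathrm{poly}(d)$ expansion without further modification.
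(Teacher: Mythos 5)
Your high-level plan---treat the WDC as a modular hypothesis in each prior landscape analysis and substitute the paper's improved concentration theorem at constant expansion---is exactly what the paper does in Appendix~\ref{app:extensions}. But you overshoot in two places. First, the $\Theta(k/m)$ Gaussian-noise rate does not require you to re-derive a chaining or variance bound: the cited result \cite{HHHV18d} already establishes the $\tilde{O}(\sigma^2 k/m)$ recovery rate under the (unnormalized) WDC and a Gaussian measurement matrix, so this item is a direct black-box plug-in exactly like phase retrieval and one-bit recovery.

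Second, and more significantly, you single out the two-layer deconvolutional case as the ``main obstacle'' and propose re-running the pseudo-Lipschitz concentration argument for circulant Gaussians via a matrix Hanson--Wright decoupling. This obstacle does not exist in the paper's treatment. As the paper notes, \cite{MAK18} already invokes the WDC for i.i.d.\ Gaussian random matrices as a black box: the deconvolutional/channel structure is handled within their own analysis, which reduces the key concentration step to the WDC applied to a dense Gaussian matrix built from the filter coefficients, with expansion measured in the number of channels. So Theorem~\ref{thm:main} drops in without any new concentration inequality for dependent Gaussians, and the ``one non-mechanical adaptation'' you anticipate is in fact unnecessary; attempting it would mean re-proving a known result in a strictly harder form. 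A minor point: for phase retrieval and one-bit recovery the relevant landscape analyses are \cite{HLV18} and \cite{QWY19}, not \cite{HHHV18, HV19} as you cite.
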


To see why expansivity plays a role in the first place, we provide some context:

\paragraph{Global landscape analysis.} The theoretical guarantees of \cite{HV19, HHHV18} fall under an emerging method for analyzing non-convex optimization problems called \emph{global landscape analysis} \cite{sun2018geometric}. Given a non-convex function $f$, the basic goal is to show that $f$ does not have spurious local minima, implying that gradient descent will (eventually) converge to a global minimum. Stronger guarantees may provide bounds on the convergence rate. In less well-behaved settings, the goal may be to prove guarantees in a restricted region, or show that the local minima inform the global minimum.

In stochastic optimization settings wherein $f$ is a random function, global landscape analysis typically consists of two steps: first, prove that $\EE f$ is well-behaved, and second, apply concentration results to prove that, with high probability, $f$ is sufficiently close to $\EE f$ that no pathological behavior can arise. The analysis of compressed sensing with generative priors by \cite{HV19} follows this two-step outline (see Section~\ref{sec:gla} for a sketch). The second step requires inducting on the layers of the network. For each layer, it's necessary to prove uniform concentration of a function which takes a weight matrix $W^{(i)}$ as a parameter; this concentration is precisely the content of the WDC. As a general rule, tall random matrices concentrate more strongly, which is why proving the WDC for random matrices requires an expansivity condition (a lower bound on the aspect ratio of each weight matrix).

\paragraph{Concentration of random functions.} The uniform concentration required for the above analysis can be abstracted as follows. Given a family of functions $\{f_t: \mathcal{X} \to \RR\}_{t \in \Theta}$ on some metric space $\mathcal{X}$, and given a distribution $P$ on $\Theta$, we pick a random function $f_t$. We seek to show that with high probability, $f_t$ is uniformly near $\EE f_t$. A generic approach to solve this problem is via Lipschitz bounds. If $f_t$ is sufficiently Lipschitz for all $t \in \Theta$, and $f_t(x)$ is near $\EE f_t(x)$ with high probability for any single $x \in \mathcal{X}$, then by union bounding over an $\epsilon$-net, uniform concentration follows.

However, in the global landscape analysis conducted by \cite{HV19}, the functions $f_t$ have poor Lipschitz constants. Pushing through the $\epsilon$-net argument therefore requires very strong pointwise concentration, and hence the expansivity condition is necessitated.

\subsection{Technical contributions}

Concentration of Lipschitz random functions is a widely-used tool in probability theory, which has found many applications in global landscape analysis for the purposes of understanding non-convex optimization, as outlined above. 
For the functions arising in our analysis, however, Lipschitzness is actually \emph{too strong} a property, and leads to suboptimal results. A main technical contribution of our paper is to define a relaxed notion of \emph{pseudo-Lipschitz} functions and derive a concentration inequality for pseudo-Lipschitz random functions. This serves as a central tool in our analysis, and is a general tool that we envision will find other applications in probability and non-convex optimization.

Informally, a function family $\{f_t: \mathcal{X} \to \RR\}_{t \in \Theta}$ is \emph{pseudo-Lipschitz} if for every $t$ there is a \emph{pseudo-ball} such that $f_t$ has small deviations when its argument is varied by a vector within the pseudo-ball. If for every $t$ the pseudo-ball is simply a ball, then the function family is plain Lipschitz. But this definition is more flexible; the pseudo-ball can  be any small-diameter, convex body with non-negligible volume and, importantly, every $t$ could have a different pseudo-ball of small deviations. We show that uniform concentration still holds; here is a simplified (and slightly specialized) statement of our result (presented in full detail in Section~\ref{sec:mainlemma} along with a formal definition of pseudo-Lipschitzness):

\begin{theorem}[Informal-Concentration of pseudo-Lipschitz random functions]\label{thm:mainintro}
Let $\theta$ be a random variable taking values in $\Theta$. Let $\{f_t: {\cal X} \to \RR\}_{t \in \Theta}$ be a function family, where ${\cal X}$ is a subset of $\cal B$, the unit-ball in $\mathbb{R}^k$. Suppose that:
\begin{enumerate}[label = (\arabic*)]
    \item For any fixed $x$, the random variable $f_\theta(x)$ is well-concentrated around its mean,
    \item $\{f_t\}_{t \in \Theta}$ is pseudo-Lipschitz,
    \item $\EE f_\theta(x)$ is Lipschitz in $x$.
\end{enumerate}
Then $f_\theta$ is well-concentrated around its mean, uniformly in $x$. Quantitatively, the strength of the concentration in (1) only needs to be proportional to the inverse volume of the pseudo-balls of small deviation guaranteed by (2) raised to the power $k$, i.e.~the number of pseudo-balls needed to cover $\cal B$ .
\end{theorem}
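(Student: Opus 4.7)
The plan is to mimic the classical $\epsilon$-net plus union-bound template for uniform concentration of Lipschitz random functions, but with pseudo-balls (rather than Euclidean balls) playing the role of the covering elements. In the Lipschitz case one takes an $\epsilon$-net of size $(1/\epsilon)^k$, applies pointwise concentration at every net point, and then bridges to arbitrary $x$ by Lipschitzness. Here the analogue should hold with net size on the order of the pseudo-ball covering number of $\CB$, which is precisely the quantity that the theorem identifies as controlling the required pointwise concentration strength.

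Fix for each $t \in \Theta$ a pseudo-ball $P_t$ guaranteed by hypothesis (2), and let $v$ be a uniform lower bound on $\Vol(P_t)$. The first step is to construct a finite net $\mathcal{N} \subseteq \CB$ with the following adaptation property: for every $\theta$ and every $x \in \mathcal{X}$ there exists $\tilde{x} \in \mathcal{N}$ with $x - \tilde{x} \in P_\theta$, so that hypothesis (2) applies to the pair $(x,\tilde{x})$. A volumetric packing argument should allow $|\mathcal{N}|$ of order $\Vol(\CB)/v$. Next I would apply hypothesis (1) at each net point and union-bound over $\mathcal{N}$, obtaining $|f_\theta(\tilde{x}) - \EE f_\theta(\tilde{x})| \le \epsilon$ simultaneously for all $\tilde{x} \in \mathcal{N}$, with high probability over $\theta$. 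Conditioned on this event, for arbitrary $x \in \mathcal{X}$ I would pick the net point $\tilde{x}$ guaranteed by the adaptation property and apply the triangle inequality to
\[ |f_\theta(x) - \EE f_\theta(x)| \le |f_\theta(x) - f_\theta(\tilde{x})| + |f_\theta(\tilde{x}) - \EE f_\theta(\tilde{x})| + |\EE f_\theta(\tilde{x}) - \EE f_\theta(x)|. \]
The first term is controlled by pseudo-Lipschitzness (2) since $x-\tilde{x} \in P_\theta$, the middle term by the union-bounded pointwise concentration, and the third term by hypothesis (3) together with the small diameter of $P_\theta$.

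The main obstacle, and the place where the real technical novelty must live, is the construction of a single net $\mathcal{N}$ that simultaneously realizes the adaptation property across all $\theta \in \Theta$. Because the shapes, not merely the volumes, of the pseudo-balls $P_\theta$ may vary with $\theta$, a net finely tailored to one pseudo-ball need not be tailored to another; crude workarounds, such as unioning coverings over a discretization of $\Theta$ or taking $\mathcal{N}$ to be a fine Euclidean $\epsilon$-net, either blow up $|\mathcal{N}|$ or reintroduce a Lipschitz requirement through the back door, defeating the whole point. I would expect the proof to exploit additional geometric regularity baked into the formal definition of pseudo-Lipschitzness — for example a uniform lower bound on a "core" inscribed body common to all $P_t$, or a randomized choice of net whose expected adaptation cost is small for every $\theta$ — to keep $|\mathcal{N}|$ at the order of a single pseudo-ball covering number of $\CB$, thereby matching the stated quantitative bound on the required pointwise concentration.
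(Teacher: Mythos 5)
Your high-level template (adaptive net, union bound over net points, triangle-inequality bridge via pseudo-Lipschitzness and smoothness of the mean) matches the paper's outline, and you correctly identify the central obstacle: the pseudo-balls $B_t$ vary in shape with $t$, so no single fixed net can simultaneously realize the adaptation property $x - \tilde x \in B_\theta$ for all $\theta$. However, the proposal stops at naming the obstacle and does not supply the idea that resolves it, and the two workarounds you float are not what the paper does and would not work. A ``common inscribed core'' shared by all $B_t$ would generically have to be a Euclidean ball of radius $\Theta(\epsilon/L)$ where $L$ is the worst-case Lipschitz constant (in Example~2.4 the intersection over all directions $w$ of the slabs $\{|w\cdot x|\le\epsilon\}$ is exactly such a tiny ball), which reintroduces the $\log k$ factor and defeats the purpose. ``A randomized net whose expected adaptation cost is small for every $\theta$'' is vague and, as stated, still presumes the net is chosen independently of $\theta$.

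The paper's actual resolution is the opposite of independence: the net $\mathcal{N}$ is \emph{allowed to depend on $\theta$}. It is built greedily from the pseudo-balls $x_j + \tfrac12(B_\theta\cap\delta\CB)$ of the realized $\theta$, then each center $x_j$ is replaced by a uniformly random perturbation $y_j$ within its pseudo-ball. Because $\mathcal{N}$ is now $\theta$-dependent, a naive union bound applies the pointwise concentration hypothesis to points correlated with $\theta$, which is illegal. The missing ingredient is a \emph{decoupling} argument (Lemma~4.5 in the paper): conditionally on $\theta=t$, the tuple $(y_{j_1},\dots,y_{j_d})$ has the same law as $d$ independent uniform samples from the shell $\CH$ conditioned to land in the corresponding pseudo-balls, and that conditioning event has probability at least $\gamma^d(\delta/3)^{kd}$. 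Dividing by this probability converts the bad-event probability at a $\theta$-dependent net point into the bad-event probability at a $\theta$-independent uniform point, where hypothesis~(1) applies cleanly, at the cost of a factor $\gamma^{-d}(3/\delta)^{kd}$. Combined with the volumetric bound $|\mathcal{N}|\le\gamma^{-1}(5/\delta)^k$, this yields the stated $\gamma^{-2d}(4/\delta)^{2kd}p$ failure probability. Without this decoupling step the union bound over a $\theta$-dependent net is unjustified, so the proposal as written has a genuine gap rather than merely a different route.
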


\pdfimageresolution=300
\begin{figure}[t]
    \centering
    \begin{subfigure}[t]{0.31\textwidth}
    \centering
    \includegraphics[width=0.65\textwidth]{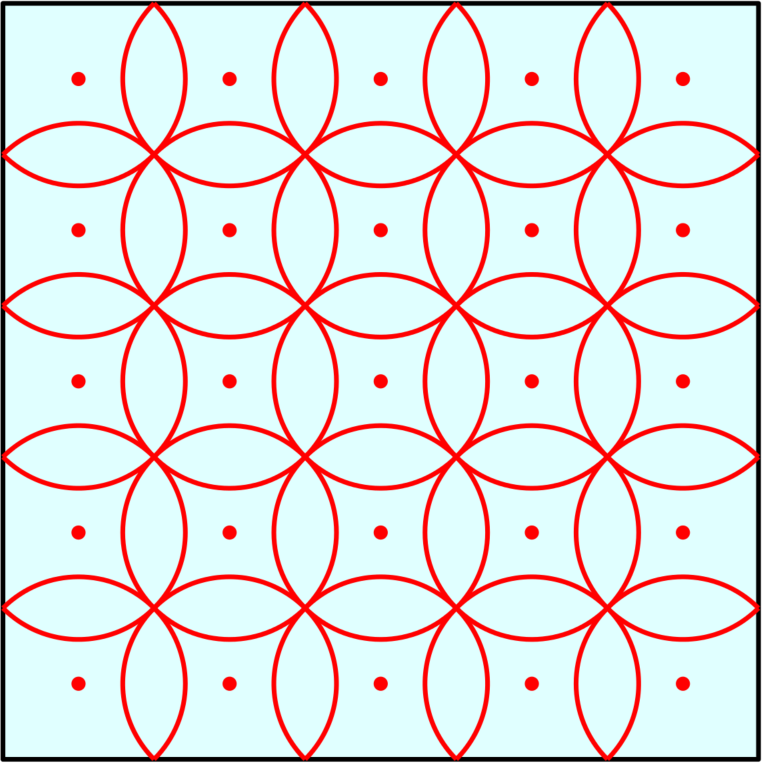}
    \caption{Spherical $\epsilon$-Net: for all $t$, $f_t$ deviates by at most $\epsilon$ within each ball.}
    \label{fig:neta}
    \end{subfigure}
    \hfill
    \begin{subfigure}[t]{0.31\textwidth}
    \centering
    \includegraphics[width=0.65\textwidth]{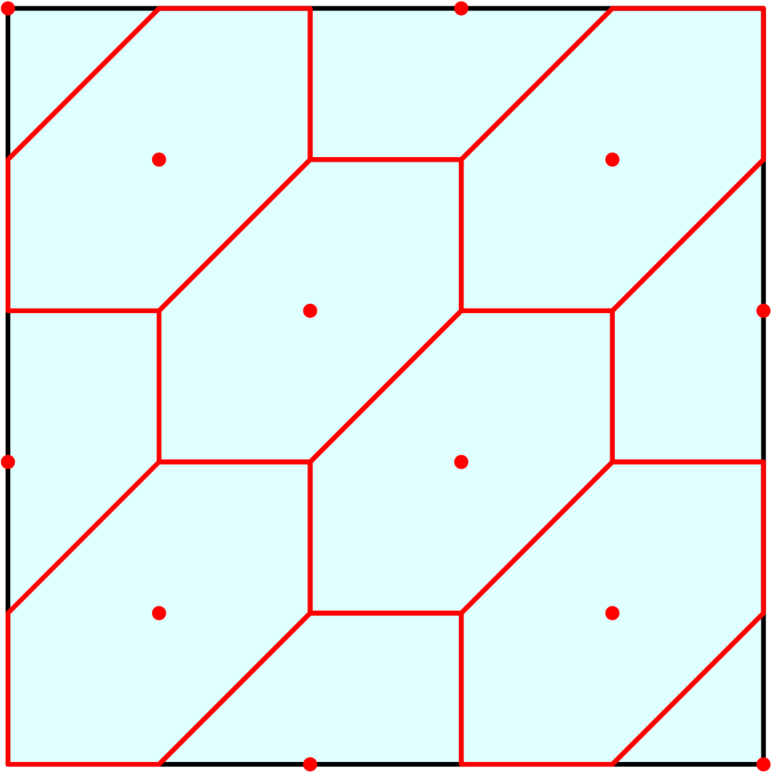}
    \caption{Aspherical $\epsilon$-Net: for a {\em specific}~$t$, $f_t$ deviates by at most~$\epsilon$ within each weirdly-shaped pseudo-ball.}
    \label{fig:netb}
    \end{subfigure}
    \hfill
    \begin{subfigure}[t]{0.31\textwidth}
    \centering
    \includegraphics[width=0.65\textwidth]{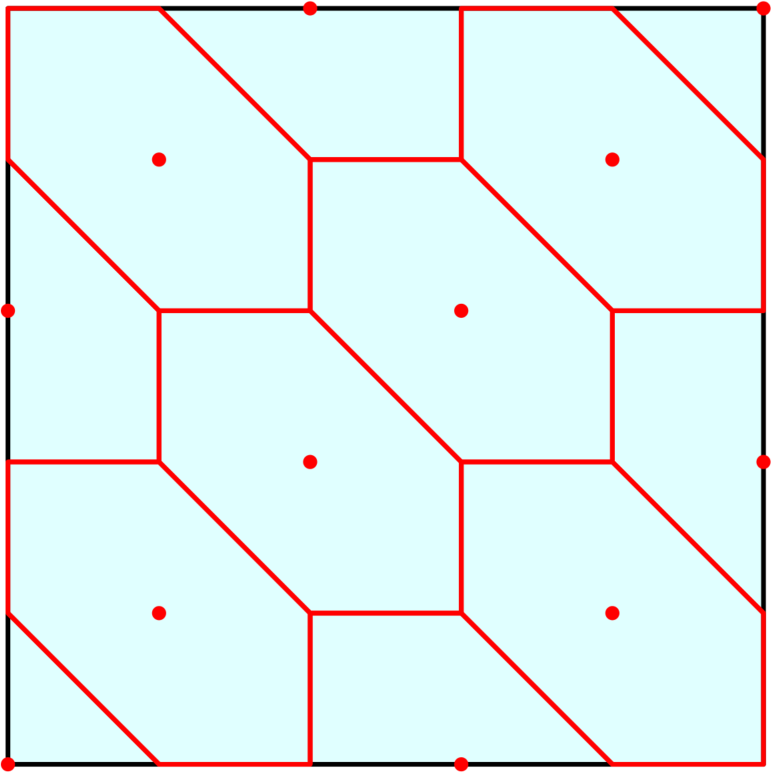}
    \caption{Aspherical $\epsilon$-Net$'$: for a different $t'$, $f_{t'}$ deviates by at most $\epsilon$ within each weirdly-shaped pseudo-ball.}
    \label{fig:netc}
    \end{subfigure}
    \caption{A parameter-independent spherical $\epsilon$-net, versus a parameter-dependent aspherical $\epsilon$-net for two specific parameters; in particular, the parameter $t$ determines the rotational angle of the weirdly-shaped pseudo-balls within which $f_t$ changes by at most $\epsilon$. The shaded square is $\cal X$. Each ball in the leftmost panel is the intersection of weirdly-shaped pseudo-balls with the same center, under all possible rotations. As such the radius of this ball is small, so to cover $\cal X$ with such balls we need a lot of small balls. Instead, the aspherical parameter-dependent  $\epsilon$-nets are more efficient.}
    \label{fig:nets}
\end{figure}

This result achieves asymptotically stronger results than are possible through Lipschitz concentration. Where does the gain come from? For each parameter $t$, consider the pseudo-ball of $\epsilon$-deviations of $f_t$, as guaranteed by the pseudo-Lipschitzness. A standard $\epsilon$-net would be covering the metric space by balls (as exemplified in~Figure~\ref{fig:neta}), each of which would have to lie in the intersection of the pseudo-balls of all parameters $t$. If for each parameter the pseudo-ball is ``wide'' in a different direction (see Figures~\ref{fig:netb} and~\ref{fig:netc} for a schematic), then the balls of the standard $\epsilon$-net may be very small compared to any pseudo-ball, and the size of the standard $\epsilon$-net could be very large compared to the size of the $\epsilon$-net obtained for any  fixed $t$ using pseudo-balls. Hence, the standard Lipschitzness argument may require much stronger concentration in (1) than our result does, in order to union bound over a potentially much larger $\epsilon$-net.

There is an obvious technical obstacle to our proof: the pseudo-balls depend on the parameter $t$, so an efficient covering of $\cal X$ by pseudo-balls will necessarily depend on $t$. It's then unclear how to union bound over the centers of the pseudo-balls (as in the standard Lipschitz concentration argument). We resolve the issue with a decoupling argument. Thus, we ultimately show that under mild conditions, the pseudo-Lipschitz random function is asymptotically as well-concentrated as a Lipschitz random function---even though its Lipschitz constant may be much worse.

\paragraph{Applications.} With our new technique, we are able to show that the WDC holds for $n \times k$ Gaussian matrices whenever $n \geq ck$, for some absolute constant $c$, where previously it was only known to hold if $n \geq ck \log k$. As a consequence, we show Theorem~\ref{thm:main-compressed}: that compressed sensing with a random neural network prior does not require the logarithmic expansivity condition.



In addition, there has been follow-up research on variations of the CS-DGP problem described above. The WDC is a critical assumption which enables efficient recovery in the setting of Gaussian noise \cite{HHHV18d}, as well as global landscape analysis in the settings of phaseless measurements \cite{HLV18}, one-bit (sign) measurements \cite{QWY19}, two-layer convolutional neural networks \cite{MAK18}, and low-rank matrix recovery \cite{cocola2020nonasymptotic}. Moreover, there are currently no known theoretical results in this area---compressed sensing with generative priors---that avoid the WDC: hence, until now, logarithmic expansion was necessary to achieve any provable guarantees. Our result extends the prior work on these problems, in a black-box fashion, to random neural networks with constant expansion. We refer to Appendix~\ref{app:extensions} for details about these extensions.

\paragraph{Lower bound.} As a complementary contribution, we also provide a simple lower bound on the expansion required to recover the latent vector. This lower bound is strong in several senses: it applies to one-layer neural networks, even in the absence of compression and noise, and it is an information theoretic lower bound. Without compression and noise, the problem is simply inverting a neural network, and it has been shown \cite{LJDD19} that inversion is computationally tractable if the network consists of Gaussian matrices with expansion $2+\epsilon$. In this setting our lower bound is tight: we show that expansion by a factor of $2$ is in fact necessary for exact recovery. Details are deferred to Appendix~\ref{app:lower}.

\subsection{Roadmap}

In Section~\ref{sec:prelim}, we introduce basic notation. In Section~\ref{sec:wdc} we formally introduce the Weight Distribution Condition, and present our main theorem about the Weight Distribution Condition for random matrices. In Section~\ref{sec:mainlemma} we define pseudo-Lipschitz function families, allowing us to formalize and prove Theorem~\ref{thm:mainintro}. In Section~\ref{sec:technique}, we show how uniform concentration of pseudo-Lipschitz random functions implies that Gaussian random matrices with constant expansion satisfy the WDC. Finally, in Section~\ref{sec:compressed} we prove Theorem~\ref{thm:main-compressed}.


\section{Preliminaries}\label{sec:prelim}

For any vector $v \in \RR^k$, let $\norm{v}$ refer to the $\ell_2$ norm of $x$, and for any matrix $A \in \RR^{n \times k}$ let $\norm{A}$ refer to the operator norm of $A$. If $A$ is symmetric, then $\norm{A}$ is also equal to the spectral norm $\lambda_\text{max}(A)$. 

Let $\CB$ refer to the unit $\ell_2$ ball in $\RR^k$ centered at $0$, and let $S^{k-1}$ refer to the corresponding unit sphere. For a set $S \subseteq \RR^k$ let $\alpha S = \{\alpha x: x \in S\}$ and let $\beta + S = \{\beta + x: x \in S\}$. 

For a matrix $W \in \RR^{n \times k}$ and a vector $x \in \RR^k$, let $W_{+,x}$ be the $n \times k$ matrix $W_{+,x} = \diag(Wx > 0)W.$ That is, row $i$ of $W_{+,x}$ is equal to $W_i$ if $W_ix > 0$, and is equal to $0$ otherwise.

\section{Weight Distribution Condition}\label{sec:wdc}

In the existing literature in compressed sensing, many results for recovery of a
sparse signal are based on an assumption on the measurement matrix that is called the Restricted
Isometry Property (RIP). Many results then follow the same paradigm: they first prove that sparse recovery is possible under the RIP, and then show that a random matrix drawn
from some specific distribution or class of distributions satisfies the RIP. The same
paradigm has been followed in the literature of signal recovery under the deep generative 
prior. In virtually all of these results, the properties that correspond to the RIP property
are the combination of the Range Restricted Isometry Condition (RRIC) and the Weight 
Distribution Condition (WDC). Our main focus in this paper is to improve upon the existing
results related to the WDC property. The WDC has the following definition due to \cite{HV19}.

\begin{definition}\label{defn:wdc}
A matrix $W \in \RR^{n \times k}$ is said to satisfy the \emph{(normalized) Weight Distribution Condition (WDC)} with parameter $\epsilon$ if for all nonzero $x,y \in \RR^k$ it holds that $\norm{\frac{1}{n} W_{+,x}^T W_{+,y} - Q_{x,y}} \leq \epsilon$,
where $Q_{x,y} = \frac{1}{n} \EE W_{+,x}^T W_{+,y}$ (with expectation over i.i.d. $N(0,1)$ entries of $W$).
\end{definition} 

\begin{remark}
Note that the factor $1/n$ in front of $W_{+,x}^T W_{+,y}$ is not present in the actual condition \cite{HV19}, hence the term ``normalized". We scale up $W$ by a factor of $\sqrt{n}$ to simplify later notation.
\end{remark}

In Appendix~\ref{sec:gla}, we provide a detailed explanation of why the WDC arises and we also give a sketch of the basic theory of global landscape analysis for compressed sensing with generative priors.



\subsection{Weight Distribution Condition from constant expansion}\label{sec:results}


To prove Theorem~\ref{thm:main-compressed}, our strategy is to  prove that the WDC holds for Gaussian random matrices with constant aspect ratio:

\begin{theorem}\label{thm:main}
There are constants $c, C>0$ with the following property. Let $\epsilon > 0$ and let $n,k \in \mathbb{N}$. Suppose that $n \geq c \cdot k \cdot \epsilon^{-2} \log(1/\epsilon)$. If $W \in \RR^{n \times k}$ is a matrix with independent entries drawn from $N(0,1)$, then $W$ satisfies the normalized WDC with parameter $\epsilon$, with probability at least $1 - e^{-Ck}$.
\end{theorem}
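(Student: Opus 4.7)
The proof plan is to combine a pointwise concentration estimate with the pseudo-Lipschitz uniform concentration theorem (Theorem~\ref{thm:mainintro}), taking the parameter of the pseudo-Lipschitz family to be the random matrix $W$ itself. Since $\|A\| = \sup_{u,v \in S^{k-1}} u^T A v$ and $W_{+,\alpha x} = W_{+,x}$ for $\alpha > 0$, the WDC is equivalent to the scalar bound
\[ \left| u^T\Bigl(\tfrac{1}{n} W_{+,x}^T W_{+,y} - Q_{x,y}\Bigr) v \right| \;\le\; \epsilon \]
uniformly in $x,y,u,v \in S^{k-1}$. The dependence on $u,v$ is handled by a classical $\epsilon$-net of size $(C/\epsilon)^{2k}$, since the form is bilinear in $(u,v)$. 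Fixing such $u,v$ for the rest of the argument, the random function under study becomes
\[ f_W(x,y) \;=\; \tfrac{1}{n}\sum_{i=1}^n \mathbf{1}[W_i x > 0]\,\mathbf{1}[W_i y > 0]\,(W_i u)(W_i v), \]
and the goal reduces to uniform concentration of $f_W$ in $(x,y) \in S^{k-1}\times S^{k-1}$, which I obtain from Theorem~\ref{thm:mainintro} applied with $\theta = W$.

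Of the three hypotheses of Theorem~\ref{thm:mainintro}, hypothesis (1) is routine: for any fixed $(x,y,u,v)$ the quantity $n f_W(x,y)$ is a sum of $n$ independent sub-exponential variables with bounded mean, so Bernstein's inequality yields $\Pr[|f_W - \EE f_W| > \epsilon] \le 2\exp(-cn\epsilon^2)$. Hypothesis (3) follows from the closed form of $Q_{x,y}$ derived in~\cite{HV19}, which is Lipschitz on $S^{k-1}\times S^{k-1}$. The substantive step is verifying hypothesis (2), the pseudo-Lipschitzness of $\{f_W\}$. For each realization of $W$, $f_W$ is \emph{exactly constant} on each cell of the random hyperplane arrangement $\{z : W_i z = 0\}_{i=1}^n$, and within a small convex neighborhood of a base point $(x_0,y_0)$ only a few of the indicators $\mathbf{1}[W_i \cdot > 0]$ can flip, each contributing $O(1/n)$ to $f_W$. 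The idea is to choose this neighborhood adaptively: a $W$-dependent convex pseudo-ball, elongated in the directions in which few of the hyperplanes $\{W_i z = 0\}$ lie near the base point, controls the maximal deviation of $f_W$ by $\epsilon$ while maintaining volume $(C\epsilon)^{O(k)}$.

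Plugging these three ingredients into Theorem~\ref{thm:mainintro} converts the pointwise exponent $\exp(-cn\epsilon^2)$ into uniform concentration in $(x,y)$ with an extra $O(k\log(1/\epsilon))$ loss, the logarithm of the number of pseudo-balls needed to cover $\CB\times\CB$. The additional $(C/\epsilon)^{2k}$ union bound over the $(u,v)$-net incurs another $O(k\log(1/\epsilon))$ loss, so the hypothesis $n \ge c' k \epsilon^{-2}\log(1/\epsilon)$ reduces the overall failure probability to at most $e^{-Ck}$ for an absolute constant $C>0$, as required. The main obstacle is the construction of the $W$-dependent pseudo-balls: one must exhibit, for every $W$ in a high-probability event and every base point, a convex body of volume $(C\epsilon)^{O(k)}$ in which only a controlled number of sign indicators can flip, and one must furthermore argue that the orientations of these bodies can be chosen coherently enough for the decoupling step inside Theorem~\ref{thm:mainintro} to go through. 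This is exactly where the classical Lipschitz $\epsilon$-net argument fails: a $W$-independent spherical $\epsilon$-net would have to take balls small enough to forbid sign flips \emph{uniformly} in $W$, blowing up the net size and forcing the prior $n \gtrsim k\log k$ condition.
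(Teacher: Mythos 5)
Your plan captures the overall architecture of the paper's proof — reduce to a scalar form via an $\epsilon$-net in $u$, apply pseudo-Lipschitz uniform concentration in $(x,y)$ with parameter $\theta = W$, and verify the three hypotheses of the concentration theorem — but it has a genuine gap in the pseudo-Lipschitzness step.

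You define $f_W(x,y) = \frac{1}{n}\sum_i \mathbf{1}[W_ix>0]\,\mathbf{1}[W_iy>0]\,(W_iu)(W_iv)$ and argue it is pseudo-Lipschitz because ``within a small convex neighborhood of a base point only a few indicators can flip,'' with the pseudo-ball ``elongated in the directions in which few of the hyperplanes lie near the base point.'' But the pseudo-Lipschitz condition of Theorem~\ref{lemma:main} is not base-point-dependent: there must exist a single set $B_W$ (depending only on $W$) such that $|f_W(a)-f_W(b)| \le \epsilon$ \emph{for every pair $a,b$} with $b-a$ in $B_W$ coordinatewise. Whether an indicator $\mathbf{1}[W_iz>0]$ flips under a given displacement $v$ depends not just on $v$ but on how close the base point is to the hyperplane $\{W_iz=0\}$; a base point chosen near many hyperplanes will see many flips inside any $B_W$ of non-negligible volume. (Each flip contributes $(W_iu)(W_iv)/n$, which is $\Theta(k/n)$ in the worst case on the high-probability event, not $O(1/n)$, so even a modest number of flips can swamp $\epsilon$.) A base-point-dependent pseudo-ball would fix this locally, but Theorem~\ref{lemma:main}'s decoupling argument builds the random net $\mathcal{N}$ by translating a single $B_\theta$, so that dependence is not permitted, as you partly suspect at the end of your write-up.

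The paper avoids this entirely by first sandwiching the discontinuous function: it introduces $1/\epsilon$-Lipschitz step approximations $h_{-\epsilon}\ge \mathbf{1}_{z\ge 0}\ge h_\epsilon$ and works with $G_{W,\pm\epsilon}(x,y)=\sum_i h_{\pm\epsilon}(w_ix)h_{\pm\epsilon}(w_iy)w_iw_i^T$, which satisfy $G_{W,\epsilon}\preceq W_{+,x}^TW_{+,y}\preceq G_{W,-\epsilon}$. The Lipschitz bound $|h_{-\epsilon}(M_iy)-h_{-\epsilon}(M_i\ty)|\le \frac{1}{\epsilon}|M_i(y-\ty)|$ turns the deviation of $f_M$ into a bound on $\frac{1}{\epsilon n}\sum_i |M_i(y-\ty)|(M_iu)^2$, which is a function of the \emph{displacement alone}. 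This makes the choice $B_{M,t,u}=\{v:\sum_i|M_iv|(M_iu)^2\le tn\}$ a valid, base-point-independent pseudo-ball, and Lemma~\ref{lemma:volume} lower-bounds its volume. Without the continuous surrogate, the needed uniform pseudo-Lipschitz property simply does not hold for your $f_W$. A minor further simplification in the paper: $W_{+,x}^TW_{+,y}$ is symmetric, so a single quadratic form $u^TGu$ with a net only over $u$ suffices; the bilinear form with both $u$ and $v$ is unnecessary.
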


Equivalently, if $W$ has entries i.i.d. $N(0,1/n)$, then it satisfies the unnormalized WDC with high probability. The proof of \ref{thm:main} is provided in Section~\ref{sec:technique}. It uses concentration of pseudo-Lipschitz functions, which are introduced formally in Section~\ref{sec:mainlemma}. As shown in Section~\ref{sec:compressed}, Theorem~\ref{thm:main-compressed} then follows from prior work.

\section{Uniform concentration beyond Lipschitzness}\label{sec:mainlemma}

In this section we present our main technical result about uniform concentration 
bounds. We generalize a folklore result about uniform concentration of Lipschitz functions by
generalizing Lipschitzness to a weaker condition which we call \textit{pseudo-Lipschitzness}. This
concentration result can be used to prove Theorem \ref{thm:main}. Moreover we believe that it may have broader applications.

Before stating our result, let us first define the notion of pseudo-Lipschitzness of function
families and give some comparison with the classical notion of Lipschitzness. Let 
$\{f_t: (\RR^k)^d \to \RR\}$ be a family of functions over matrices parametrized by $t \in \Theta$. We have
the following definitions:

\begin{definition}
    A set system $\{B_t \subseteq \RR^k: t \in \Theta\}$ is $(\delta,\gamma)$-wide if 
  $B_t = -B_t$, $B_t$ is convex, and 
  \[\Vol(B_t \cap \delta \CB) \geq \gamma \Vol(\delta \CB) \text{ ~~~~~ for all $t \in \Theta$}.\]
\end{definition}

\begin{definition}[Pseudo-Lipschitzness]
Suppose that there exists a $(\delta,\gamma)$-wide set system $\{B_t \subseteq \RR^k: t \in \Theta\}$, such that \[|f_t(x) - f_t(y)| \leq \epsilon\] for any $t \in \Theta$ and $x,y \in (\RR^k)^d$ with $y_i - x_i \in B_t$ for all $i \in [d]$. Then we say that $\{f_t\}_{t \in \Theta}$ is $(\epsilon,\delta,\gamma)$-pseudo-Lipschitz.


\end{definition}

\begin{example}\label{ex:pl}
Let $\Theta = \{w \in \RR^k: \norm{w} \leq 2\sqrt{k}\}$ and let $\epsilon>0$. Then the family of functions $\{f_w(x) = w \cdot x\}_{w \in \Theta}$ is only $2\sqrt{k}$-Lipschitz. So to have $|f_w(x)-f_w(y)| \leq \epsilon$ we need $\norm{x-y} \leq \epsilon/(2\sqrt{k})$. 

On the other hand, it can be seen that the set system $\{B_w \subseteq \RR^k: w \in \Theta\}$ defined by $B_w = \{x \in \RR^k: |w \cdot x| \leq \epsilon\}$ is $(c\epsilon,1/2)$-wide for a constant $c>0$ (by standard arguments about spherical caps). Therefore the family of functions $f_w(x) = w \cdot x$ is $(\epsilon, c\epsilon, 1/2)$-pseudo-Lipschitz. 
\end{example}

Our main technical result is that the above relaxation of Lipschitzness suffices to obtain strong uniform concentration of measure results; see Section~\ref{sec:technicalproof} for the proof.

\begin{theorem}\label{lemma:main}
Let $\theta$ be a random variable taking values in $\Theta$. Let $\{f_t: (\RR^k)^d\to \RR: t \in \Theta\}$ be a function family, and let $g: (\RR^k)^d \to \RR$ be a function. Let $\epsilon, \gamma, D > 0$ and $\delta \in (0,1)$. Define the spherical shell $\CH = (1+\delta/2)\CB \setminus (1-\delta/2)\CB$ in $\mathbb{R}^k$. Suppose that: 
\begin{enumerate}
\item For any fixed $x \in \CH^d$, \[\Pr_\theta[f_\theta(x) \leq g(x) + \epsilon] \geq 1-p,\]
\item $\{f_t\}_{t \in \Theta}$ is $(\epsilon,\delta,\gamma)$-pseudo-Lipschitz,
\item $|g(x) - g(y)| \leq D$ whenever $x \in (S^{k-1})^d$, $y \in (\RR^k)^d$, and $\norm{y_i-x_i}_2 \leq \delta$ for all $i \in [d]$.
\end{enumerate}
Then: \[\Pr_\theta\left[f_\theta(x) \leq g(x) + 2\epsilon + D,~\forall x \in (S^{k-1})^d\right] \geq 1 - \gamma^{-2d} (4/\delta)^{2kd} p.\]
\end{theorem}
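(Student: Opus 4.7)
The goal is to promote condition~(1)---a pointwise concentration bound---into a uniform bound over $(S^{k-1})^d$, using the pseudo-Lipschitzness~(2) in place of ordinary Lipschitzness. I would proceed by a decoupling argument in three stages.

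I would first reduce to a union bound over a discrete net. Let $\mathcal{N}$ be a $(\delta/2)$-net of $(S^{k-1})^d$, of cardinality at most $(4/\delta)^{kd}$. In the classical Lipschitz setting, applying condition~(1) at each $x_0 \in \mathcal{N}$ and interpolating via the Lipschitz constant would suffice. Here the obstacle is that pseudo-Lipschitzness only controls $f_\theta$ along the $\theta$-\emph{dependent} pseudo-ball $B_\theta$, so naive Euclidean interpolation fails.

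The heart of the proof is a decoupling of $\theta$ from the evaluation point via two independent auxiliary shifts $\xi_1,\xi_2$, each uniform on $(\delta/4)\CB^d$ and independent of $\theta$. For any fixed $(\xi_1,\xi_2)$ and $x_0\in\mathcal{N}$, the point $x_0+\xi_1+\xi_2$ lies in $\CH^d$, so condition~(1) yields $\Pr_\theta[f_\theta(x_0+\xi_1+\xi_2)>g(x_0+\xi_1+\xi_2)+\epsilon]\le p$. Union bounding over $\mathcal{N}$ and applying Fubini,
\[
\mathbb{E}_\theta\bigl[\Pr_{\xi_1,\xi_2}[\exists x_0\in\mathcal{N}:\;f_\theta(x_0+\xi_1+\xi_2)>g(x_0+\xi_1+\xi_2)+\epsilon]\bigr]\le |\mathcal{N}|\,p.
\]
Markov's inequality at threshold $\gamma^{2d}/2$ shows that, outside a $\theta$-event of probability at most $2|\mathcal{N}|p/\gamma^{2d}$, the bad-shift event has $(\xi_1,\xi_2)$-measure strictly less than $\gamma^{2d}/2$. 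A rescaling of wideness to the radius $\delta/4$ (using that $B_\theta$ is convex and symmetric, and in particular contains $0$) gives $\Pr_{\xi_j}[\xi_j\in(B_\theta)^d]\ge\gamma^d$, so by independence the two ``compatible-shift'' events intersect in measure at least $\gamma^{2d}$, hence intersect the good-shift event nontrivially. This produces specific shifts $\xi_1^\ast,\xi_2^\ast\in(B_\theta\cap(\delta/4)\CB)^d$ for which the pointwise event holds at every $x_0+\xi_1^\ast+\xi_2^\ast$.

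For any $x\in(S^{k-1})^d$ I would then take the closest $x_0\in\mathcal{N}$ and chain together pseudo-Lipschitz hops along $\xi_1^\ast$ and $\xi_2^\ast$, the pointwise bound at $x_0+\xi_1^\ast+\xi_2^\ast$, and condition~(3) applied in the $\delta$-neighborhood of $(S^{k-1})^d$, concluding $f_\theta(x)\le g(x)+2\epsilon+D$. The main obstacle is that $x-x_0$ need not lie in $B_\theta$, so pseudo-Lipschitzness cannot bridge the net residue directly; the role of the second shift $\xi_2^\ast$, combined with the convexity of $B_\theta$, is to enable an intermediate hop that absorbs this Euclidean gap. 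This structural need for a second shift is what is responsible for the squared factors $\gamma^{-2d}$ and $(4/\delta)^{2kd}$ in the final failure probability, as opposed to the single-power factors that would appear in a pure Lipschitz argument.
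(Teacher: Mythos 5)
The decoupling-by-random-shift idea you propose is close in spirit to the paper's proof, but the final chaining step does not go through, and the gap is not cosmetic --- it is exactly the obstruction that forces the paper to use a different net construction. You work over a \emph{fixed}, $\theta$-independent $(\delta/2)$-net $\mathcal{N}$ of $(S^{k-1})^d$, then find shifts $\xi_1^\ast,\xi_2^\ast\in(B_\theta\cap(\delta/4)\CB)^d$ so that the pointwise bound holds at every $x_0+\xi_1^\ast+\xi_2^\ast$. To conclude for an arbitrary $x\in(S^{k-1})^d$ you must pass from $x_0+\xi_1^\ast+\xi_2^\ast$ to $x$ using only pseudo-Lipschitz hops, i.e.\ hops whose increments lie coordinate-wise in $B_\theta$. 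Hopping off $\xi_1^\ast$ and $\xi_2^\ast$ costs $2\epsilon$ and brings you to $x_0$, but the residual $x-x_0$ is a generic vector of Euclidean norm up to $\delta/2$ with no reason to lie in $B_\theta$. No number of additional hops along vectors in $B_\theta$ can absorb a component of $x-x_0$ along a direction in which $B_\theta$ is thin. Concretely, in the Example~\ref{ex:pl} geometry, $B_\theta$ is a slab of half-width $\Theta(\epsilon/\sqrt{k})$ normal to $w$, while $x-x_0$ can have a $w$-component of order $\delta\approx\epsilon$; the $B_\theta$-diameter along $w$ is a factor $\sqrt{k}$ too small to cover it. Your one-sentence appeal to ``an intermediate hop that absorbs this Euclidean gap'' is exactly the missing argument.

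The paper resolves this by making the net itself $\theta$-dependent: it greedily covers $S^{k-1}$ by translates of $\tfrac12(B_\theta\cap\delta\CB)$, so that by construction every $b\in S^{k-1}$ lies within $\tfrac12(B_\theta\cap\delta\CB)$ of some net center $x_j$; the random perturbation $y_j\sim\Unif\bigl(x_j+\tfrac12(B_\theta\cap\delta\CB)\bigr)$ is then introduced purely to decouple $y_j$ from $\theta$ (by comparing against a uniform point in the shell $\CH$, conditioned to land in the pseudo-ball). Because both $b-x_j$ and $y_j-x_j$ lie in $\tfrac12(B_\theta\cap\delta\CB)$, convexity and symmetry give $b-y_j\in B_\theta$, and the pseudo-Lipschitz hop is legitimate. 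The net size is controlled by a packing/volume argument (Lemma~\ref{lemma:size}), not by a Euclidean covering count. If you replace your fixed Euclidean net with this $\theta$-dependent pseudo-ball cover and replace the Markov-inequality step with the conditional-probability comparison in Lemma~\ref{lemma:decoupling}, the argument closes; without that replacement, the last step of your proof fails.
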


As a comparison, if the family $\{f_t\}_t$ were simply $L$-Lipschitz, then uniform concentration would hold with probability $1 - (3L/\epsilon)^{kd}p$ by standard arguments. So the ``effective Lipschitz constant'' of an $(\epsilon,\delta,\gamma)$-pseudo-Lipschitz family is $O(\epsilon/\delta)$ when $\gamma = \Omega(1)$.

\subsection{Proof of Theorem~\ref{lemma:main}}\label{sec:technicalproof}

Let $\{B_t\}_{t \in \Theta}$ be a family of sets $B_t \subseteq \RR^k$ witnessing that $\{f_t\}_{t \in \Theta}$ is $(\epsilon,\delta,\gamma)$-pseudo-Lipschitz---i.e. $\{B_t\}_{t \in \Theta}$ is $(\delta,\gamma)$-wide, and $|f_t(x) - f_t(y)| \leq \epsilon$ whenever $t \in \Theta$ and $x,y \in (\RR^k)^d$ with $y_i - x_i \in B_t$ for all $i\in [d]$. The standard proof technique for uniform concentration is to fix an $\epsilon$-net, and show that with high probability over the randomness $\theta$, every point in the net satisfies the bound. Here instead, we use the pseudo-balls $\{B_t\}_{t \in \Theta}$ to construct a random, aspherical net $\mathcal{N} \subset \RR^k$ that depends on $\theta$ and additional randomness. We'll show that with high probability over all the randomness, every point in the net satisfies the bound. In particular we use the following process to construct $\mathcal{N}$:


Let $x_0 = (1,0,\dots,0) \in S^{k-1}$. We define $x_1, x_2,\dots \in S^{k-1}$ iteratively. For $j \geq 0$, define $C_j \subseteq \RR^k$ by \[ C_j = \bigcup_{0 \leq i \leq j} \left(x_i + \frac{1}{2}(B_\theta \cap \delta\CB)\right). \] For each $j \geq 0$, if $S^{k-1} \subseteq C_j$ then terminate the process. Otherwise, by some deterministic process, pick \[ x_{j+1} \in S^{k-1} \setminus C_j. \]
Let's say that the process terminates at step $j_\text{last}$, producing a sequence of random variables $\{x_0,x_1,\dots,x_{j_\text{last}}\}$ (with randomness introduced by $\theta$). Note that $j_\text{last}$ is also a random variable.

For each $0 \leq j \leq j_\text{last}$, let $y_j$ be the random variable \[ y_j \sim \text{Unif}\left(x_j + \frac{1}{2}(B_\theta \cap \delta \CB)\right). \]
That is, $y_j$ is a perturbation of $x_j$ by a uniform sample from the bounded pseudo-ball. Let $\mathcal{N}$ be the set $\{y_0,y_1,\dots,y_{j_\text{last}}\}$. Observe that $\mathcal{N} \subset \CH$, and $S^{k-1}$ is covered by the pseudo-balls $\{y_j + B_\theta\}_j$. 

By a volume argument, we can upper bound $|\mathcal{N}| = j_\text{last}+1$.

\begin{lemma}\label{lemma:size}
The size of $\mathcal{N}$ is at most $\gamma^{-1}(5/\delta)^k$.
\end{lemma}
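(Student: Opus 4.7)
The plan is to prove the size bound by a standard packing argument adapted to the aspherical pseudo-balls $B_\theta$. The construction picks each new center $x_{j+1}$ outside of $C_j = \bigcup_{i \le j}(x_i + \tfrac{1}{2}(B_\theta \cap \delta\CB))$, so informally the $x_j$'s are well-separated with respect to the ``half-sized'' pseudo-balls; a volume packing in a ball slightly larger than $\CB$ should then bound the number of centers.

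More concretely, I would first prove the following disjointness claim: the sets $\{x_j + \tfrac{1}{4}(B_\theta \cap \delta\CB)\}_{0 \le j \le j_{\text{last}}}$ are pairwise disjoint. Suppose for contradiction that some $z$ lies in both the $i$-th and $j$-th set with $i < j$. Then $z - x_i = b_1/4$ and $z - x_j = b_2/4$ for some $b_1, b_2 \in B_\theta \cap \delta\CB$, so $x_j - x_i = (b_1 - b_2)/4 = \tfrac{1}{2}\cdot\tfrac{b_1 - b_2}{2}$. Since $B_\theta$ is symmetric ($-b_2 \in B_\theta$) and convex, the midpoint $(b_1 - b_2)/2$ lies in $B_\theta$; the same reasoning applied to $\delta\CB$ shows it lies in $\delta\CB$. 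Hence $x_j - x_i \in \tfrac{1}{2}(B_\theta \cap \delta\CB)$, i.e.\ $x_j \in x_i + \tfrac{1}{2}(B_\theta \cap \delta\CB) \subseteq C_{j-1}$, contradicting the choice $x_j \in S^{k-1} \setminus C_{j-1}$.

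Next I would bound the enclosing region and the per-piece volume. Since $\norm{x_j} = 1$ and any point in $x_j + \tfrac{1}{4}(B_\theta \cap \delta\CB)$ is within distance $\delta/4$ of $x_j$, all the disjoint sets are contained in $(1+\delta/4)\CB$, whose volume is $(1+\delta/4)^k \Vol(\CB)$. The $(\delta,\gamma)$-wideness of $B_\theta$ gives $\Vol(B_\theta \cap \delta\CB) \ge \gamma\,\delta^k \Vol(\CB)$, so each piece has volume at least $(1/4)^k \gamma\,\delta^k \Vol(\CB)$. Putting the disjointness and the volume bounds together,
\[
|\mathcal{N}| \cdot \gamma\,(\delta/4)^k \Vol(\CB) \;\le\; (1+\delta/4)^k \Vol(\CB),
\]
and rearranging yields $|\mathcal{N}| \le \gamma^{-1}\bigl((4+\delta)/\delta\bigr)^k \le \gamma^{-1}(5/\delta)^k$, using $\delta \in (0,1)$ in the last step.

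The only non-routine point is the disjointness step, which crucially uses both the central symmetry ($B_t = -B_t$) and the convexity of $B_t$ in the definition of a $(\delta,\gamma)$-wide set system; without these the half-scale separation of the centers would not translate into quarter-scale packing, and the standard Euclidean volume argument would break down. Everything else is a direct volume comparison inside a thin shell.
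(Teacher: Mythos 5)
Your proof is correct and follows the paper's argument essentially verbatim: the same quarter-scale disjointness claim (proved via symmetry and convexity of $B_\theta \cap \delta\CB$), the same packing into $(1+\delta/4)\CB$, and the same final estimate $\gamma^{-1}((4+\delta)/\delta)^k \le \gamma^{-1}(5/\delta)^k$ using $\delta < 1$.
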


\begin{proof}
For each $0 \leq j \leq j_\text{last}$ define an auxiliary set $C'_j \subseteq \RR^k$ by \[C'_j = x_i + \frac{1}{4}(B_t \cap \delta\CB).\]
We claim that the sets $C'_0,\dots,C'_{j_\text{last}}$ are disjoint. Suppose not; then there are some indices $0 \leq j_1 < j_2 \leq j_\text{last}$ and some point $z \in \RR^k$ such that $z - x_{j_1} \in \frac{1}{4}(B_\theta \cap \delta \CB)$ and also $z - x_{j_2} \in \frac{1}{4}(B_\theta \cap \delta \CB)$. It follows from convexity and symmetry of $B_t \cap \delta \CB$ that $x_{j_1} - x_{j_2} \in \frac{1}{2}(B_\theta \cap \delta \CB)$. So $x_{j_2} \in C_{j_1} \subseteq C_{j_2 - 1}$, contradicting the definition of $x_{j_2}$. So the sets $C'_0,\dots,C'_{j_\text{last}}$ are indeed disjoint.

But each $C'_j$ is a subset of $(1 + \delta/4)\CB$. By the volume lower bound on $B_t \cap \delta \CB$, we have \[ \frac{\Vol(C'_j)}{\Vol((1+\delta/4)\CB)} \geq \frac{\gamma 4^{-k} \Vol(\delta \CB)}{\Vol((1+\delta/4)\CB)} = \gamma (1 + 4/\delta)^{-k}. \]
The lemma follows.
\end{proof}

We now show that with high probability the inequality $f_\theta(a) \leq g(a) + \epsilon$ holds for all $a \in \mathcal{N}^d$ simultaneously. The main idea is that the random perturbations partially decoupled the net $\mathcal{N}$ from $\theta$. Since each point of the net is distributed uniformly over a set of non-negligible volume, the  probability that any fixed $a \in \mathcal{N}^d$ fails the concentration inequality can be bounded against the probability that a uniformly random point from the shell $\CH^d$ fails.

\begin{lemma}\label{lemma:decoupling}
We have
\[\Pr[\exists a \in \mathcal{N}^d: f_\theta(a) > g(a) + \epsilon] \leq \gamma^{-2d} (4/\delta)^{2kd} p.\]
\end{lemma}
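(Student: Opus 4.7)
The plan is to (i) union-bound over a deterministically bounded set of candidate index tuples, and (ii) for each tuple reduce to condition~(1) via a change-of-measure argument that exploits the independent uniform perturbations baked into $\mathcal{N}$. Lemma~\ref{lemma:size} provides a deterministic cap $J := \gamma^{-1}(5/\delta)^k$ on $|\mathcal{N}| = j_\text{last}+1$. So I would begin by containing the event of interest in $\bigcup_{\mathbf{j}} E_{\mathbf{j}}$, where $\mathbf{j} = (j_1,\dots,j_d)$ ranges over $\{0,\dots,J-1\}^d$ and
\[ E_{\mathbf{j}} := \{j_i \leq j_\text{last}\text{ for all }i\} \,\cap\, \{f_\theta(y_{j_1},\dots,y_{j_d}) > g(y_{j_1},\dots,y_{j_d}) + \epsilon\}. \]
A naive union bound costs a factor of at most $J^d \leq \gamma^{-d}(5/\delta)^{kd}$.

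The heart of the argument is bounding $\Pr[E_{\mathbf{j}}]$ for a fixed $\mathbf{j}$. Conditional on $\theta$ (which determines the base points $x_{j_i}$ whenever $j_i \leq j_\text{last}$), the coordinates $y_{j_1},\dots,y_{j_d}$ are independent and uniform on $x_{j_i} + \tfrac12(B_\theta\cap\delta\CB)$. By the $(\delta,\gamma)$-wide hypothesis each such set has volume at least $\gamma(\delta/2)^k\Vol(\CB)$ and lies inside $\CH$; hence the conditional joint density of $(y_{j_1},\dots,y_{j_d})$ is uniformly at most $[\gamma(\delta/2)^k\Vol(\CB)]^{-d}$ and is supported on $\CH^d$. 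Integrating over $\theta$ and invoking condition~(1) gives
\[ \Pr[E_{\mathbf{j}}] \,\leq\, \frac{1}{[\gamma(\delta/2)^k\Vol(\CB)]^d}\int_{\CH^d}\Pr_\theta[f_\theta(z)>g(z)+\epsilon]\,dz \,\leq\, \left(\frac{\Vol(\CH)}{\gamma(\delta/2)^k\Vol(\CB)}\right)^d p. \]
The crude estimates $\Vol(\CH)\leq(1+\delta/2)^k\Vol(\CB)$ and $\delta<1$ yield $\Vol(\CH)/[(\delta/2)^k\Vol(\CB)] \leq (2/\delta+1)^k \leq (3/\delta)^k$, so $\Pr[E_{\mathbf{j}}]\leq\gamma^{-d}(3/\delta)^{kd}p$. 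Summing over the $J^d$ tuples gives $\gamma^{-d}(5/\delta)^{kd}\cdot\gamma^{-d}(3/\delta)^{kd}p = \gamma^{-2d}(15/\delta^2)^{kd}p \leq \gamma^{-2d}(4/\delta)^{2kd}p$, using $15\leq 4^2$.

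The main conceptual obstacle is precisely the $\theta$-dependence of $\mathcal{N}$: one cannot simply union bound over a fixed $\epsilon$-net and apply condition~(1) pointwise. The resolution---the key decoupling step---is the auxiliary randomness baked into each $y_j$ via the uniform perturbation inside the pseudo-ball. This smearing yields a density bound uniform in $\theta$, which permits exchanging the order of integration over $\theta$ with integration over the spatial variable. The price paid is the volume ratio $\Vol(\CH)/[\gamma(\delta/2)^k\Vol(\CB)]$; this factor is precisely the source of the second power of $\gamma^{-d}(\mathrm{const}/\delta)^{kd}$ in the final bound, on top of the first power contributed by the union bound over index tuples.
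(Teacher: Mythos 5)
Your proof is correct and follows essentially the same route as the paper's: both rely on Lemma~\ref{lemma:size} for a deterministic cap on the index range, then exploit the fact that, conditional on $\theta$, the perturbed net points $y_{j_i}$ have a density on $\CH^d$ that is bounded uniformly in $\theta$, permitting an exchange of integration with condition~(1). The paper phrases the decoupling step via auxiliary variables $X_1,\dots,X_d \sim \Unif(\CH)$ and the conditional-probability inequality $\Pr[E_\text{bad}\mid E_\text{cont}] \leq \Pr[E_\text{bad}]/\Pr[E_\text{cont}]$, which is mathematically the same as your direct density bound; the constants work out the same way.
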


\begin{proof}
For any $a \in \CH^d$ let $E_\text{bad}(\theta,a)$ be the event that $f_\theta(a) > g(a) + \epsilon.$ Fix $j \in \mathbb{N}^d$. Let $A_j$ be the event that $j_1,\dots,j_d \leq j_\text{last}$. (Recall that $j_{\text{last}}$ is a deterministic function of $\theta$ which is random.) Let $X_1,\dots,X_d \sim \Unif(\CH)$ be independent uniform random variables over the shell $\CH$. Let $E_\text{cont}$ be the event that
$X_i \in x_{j_i} + \frac{1}{2}(B_\theta \cap \delta\CB)$
for all $i \in [d]$, where for convenience we define $x_j = (1,0,\dots,0) \in S^{k-1}$ for $j > j_\text{last}$. For any $t \in \Theta$, consider conditioning on $(\theta = t)$. Then $j_\text{last}$ and $x_0,\dots,x_{j_\text{last}}$ are deterministic; assume that $A_j$ occurs. The conditional random vector $(y_{j_1},\dots,y_{j_d})|(\theta = t)$ has the uniform product distribution\[\Unif\left(\prod_{i=1}^d \left(x_{j_1} + \frac{1}{2}(B_t + \delta \CB)\right)\right).\]
This is precisely the distribution of $(X_1,\dots,X_d)|(E_\text{cont}, \theta=t)$. Thus,
\begin{align}
\Pr[E_\text{bad}(\theta,(y_{j_1},\dots,y_{j_d}))|\theta = t]
&= \Pr[E_\text{bad}(\theta,(X_1,\dots,X_d))|E_\text{cont}, \theta=t] \nonumber \\
&\leq \frac{\Pr[E_\text{bad}(\theta,(X_1,\dots,X_d))|\theta = t]}{\Pr[E_\text{cont}|\theta = t]}. \label{eq:singlebound}
\end{align}

Since $(X_1,\dots,X_d)$ are independent and uniformly distributed over $\CH$,\[\Pr[E_\text{cont}|\theta = t] = \left(\frac{\Vol((B_t \cap \delta \CB)/2)}{\Vol(\CH)}\right)^d \geq \left(\frac{2^{-k}\gamma \Vol(\delta \CB)}{\Vol((1+\delta/2)\CB)}\right)^d \geq \gamma^d (\delta/3)^{kd}.\]
Substituting into Equation~\eqref{eq:singlebound} and integrating over all $\theta \in A_j$,\[\Pr[E_\text{bad}(\theta,(y_{j_1},\dots,y_{j_d})) \land A_j] \leq \gamma^{-d} (3/\delta)^{kd} \Pr[E_\text{bad}(\theta,(X_1,\dots,X_d)) \land A_j].\]

If $(X_1,\dots,X_d)$ were deterministic then we would have $f_\theta(X_1,\dots,X_d) \leq g(X_1,\dots,X_d)]$ with probability at least $1-p$. They are not deterministic, but they are independent of $\theta$, which suffices to imply the above inequality. So\[\Pr[E_\text{bad}(\theta,(y_{j_1},\dots,y_{j_d})) \land A_j] \leq \gamma^{-d} (3/\delta)^{kd} p.\]

Finally we take a union bound over $j$. By Lemma~\ref{lemma:size} we have $j_\text{last} < \gamma^{-1}(5/\delta)^k$. So\[\Pr[\exists a \in \mathcal{N}^d: E_\text{bad}(\theta,a)] \leq 
(\gamma^{-1}(5/\delta)^k)^d \gamma^{-d} (3/\delta)^{kd}p \leq \gamma^{-2d} (4/\delta)^{2kd} p.\]
as claimed.
\end{proof}

We conclude the proof of Theorem~\ref{lemma:main}. Suppose that the event of Lemma~\ref{lemma:decoupling} fails; that is, for all $a \in \mathcal{N}^d$ we have $f_\theta(a) \leq g(a) + \epsilon$. Then let $b \in (S^{k-1})^d$. By construction of $\mathcal{N}$, for every $i \in [d]$ there is some $0 \leq j_i \leq j_\text{last}$ such that $b_i \in x_{j_i} + \frac{1}{2}(B_\theta \cap \delta \CB)$.
But $y_{j_i} \in x_{j_i} + \frac{1}{2}(B_\theta \cap \delta \CB)$, so by convexity and symmetry of $B_\theta$, it follows that $b_i - y_{j_i} \in B_\theta$. Hence by assumption, we have $|f_\theta(b) - f_\theta(y_{j_1},\dots,y_{j_d})| \leq \epsilon$. Since $(y_{j_1},\dots,y_{j_d}) \in \mathcal{N}^d$, we have \[f_\theta(y_{j_1},\dots,y_{j_d}) \leq g(y_{j_1},\dots,y_{j_d}) + \epsilon.\] Thus,\[f_\theta(b) \leq f_\theta(y_{j_1},\dots,y_{j_d}) + \epsilon \leq g(y_{j_1},\dots,y_{j_d}) + 2\epsilon \leq g(b) + 2\epsilon + D\]
as desired. By Lemma~\ref{lemma:decoupling}, this uniform bound holds with probability at least $1 - \gamma^{-2d} (4/\delta)^{2kd} p$, over the randomness of $\theta$ and the net perturbations. So it also holds with at least this probability over just the randomness of $\theta$.

\section{Proof of Theorem~\ref{thm:main}}\label{sec:technique}

In \cite{HV19}, a weaker version of Theorem~\ref{thm:main} was proven---it required a logarithmic aspect ratio (i.e. $n \geq \Omega(k\log k)$). The proof was by a standard $\epsilon$-net argument. In Section~\ref{section:wdcproof:outline} we discuss why Theorem~\ref{thm:main} cannot be proven by standard arguments, and sketch how Theorem~\ref{lemma:main} yields a proof.

Then, in Section~\ref{section:wdcproof:full} we provide the full proof of Theorem~\ref{thm:main}.

Throughout, we let $W$ be an $n \times k$ random matrix with independent rows $w_1,\dots,w_n \sim N(0,1)^k$.

\subsection{Outline}\label{section:wdcproof:outline}

At a high level, the proof of Theorem~\ref{thm:main} uses an $\epsilon$-net argument, with several crucial twists. The first twist is borrowed from the prior work \cite{HV19}: we wish to prove concentration for the random function \[ W_{+,x}^T W_{+,y} = \sum_{i=1}^n \mathbbm{1}_{w_i^Tx > 0} \mathbbm{1}_{w_i^Ty>0} w_i w_i^T, \] but it is not continuous in $x$ and $y$. So for $\epsilon > 0$, define the continuous functions \[ h_{-\epsilon}(z) = \begin{cases} 0 & z \leq -\epsilon, \\ 1 + z/\epsilon & -\epsilon \leq z \leq 0, \\ 1 & z \geq 0. \end{cases} \quad \text{and} \quad h_{\epsilon}(z) = \begin{cases} 0 & z \leq 0, \\ z/\epsilon & 0 \leq z \leq \epsilon, \\ 1 & z \geq \epsilon. \end{cases} \]
Following \cite{HV19}, we can now define continuous approximations of $W_{+,x}^T W_{+,y}$:  
\[G_{W,-\epsilon}(x,y) = \sum_{i=1}^n h_{-\epsilon}(w_i \cdot x) h_{-\epsilon}(w_i \cdot y) w_i w_i^T~~\text{and}~~G_{W,\epsilon}(x,y) = \sum_{i=1}^n h_{\epsilon}(w_i\cdot x) h_{\epsilon}(w_i \cdot y) w_i w_i^T.\]
Observe that $h_{-\epsilon}$ is an upper approximation to $\mathbbm{1}_{z \geq 0}$, and $h_\epsilon$ is a lower approximation. Thus, it's clear that for all $W \in \RR^{n \times k}$ and all nonzero $x,y \in \RR^k$ we have the matrix inequality 
\begin{equation} G_{W, \epsilon}(x,y) \preceq W_{+,x}^T W_{+,y} \preceq G_{W,-\epsilon}(x,y). \label{eq: ulbound} \end{equation}
So it suffices to upper bound $G_{W,-\epsilon}(x,y)$ and lower bound $G_{W,\epsilon}(x,y)$. The two arguments are essentially identical, and we will focus on the former. We seek to prove that with high probability over $W$, for all $x,y \in S^{k-1}$ simultaneously, $G_{W,-\epsilon}(x,y) \preceq nQ_{x,y} + \epsilon n I_k.$ At this point, \cite{HV19} employs a standard $\epsilon$-net argument. This does not suffice for our purposes, because it uses the following bounds:
\begin{enumerate}
    \item For fixed $x,y,u \in S^{k-1}$, the inequality $\frac{1}{n} u^T G_{W,-\epsilon}(x,y) u \leq u^T Q_{x,y} u + \epsilon$ holds with probability $1 - e^{-\Theta(n)}$
    \item $\frac{1}{n} u^T G_{W,-\epsilon}(x,y) u^T$ is $\Theta(\sqrt{k})$-Lipschitz.
\end{enumerate}

The second bound means that the $\epsilon$-net needs to have granularity $O(1/\sqrt{k})$, so we must union bound over $\sqrt{k}^{O(k)} = e^{O(k\log k)}$ triples $x,y,u \in S^{k-1}$. Thus, a high probability bound requires $n \geq \Omega(k\log k)$. Moreover, both bounds (1) and (2) are asymptotically optimal. So a different approach is needed.

This is where Theorem~\ref{lemma:main} comes in. Let $f_W(x,y,u) = u^T G_{W,-\epsilon}(x,y)u$. If we center a ball of small constant radius at some point $(x,y,u)$, then for any $W$ there is some point in the ball where $f_W$ differs by $\Theta(\sqrt{k})$. But for each $W$, at most points $f_W$ only differs by $\Theta(1)$. More formally, it can be shown that $f_W$ is $(\epsilon, c\epsilon^2, 1/2)$-pseudo-Lipschitz (so its ``effective Lipschitz parameter" has no dependence on $k$). The desired concentration is then a corollary of Theorem~\ref{lemma:main}. 

\subsection{Full proof}\label{section:wdcproof:full}

In this section we prove Theorem~\ref{thm:main}. Fix $\epsilon > 0$. Let $W \in \RR^{n \times k}$ have independent rows $w_1,\dots,w_n \sim N(0,1)^k$. Recall from Section~\ref{sec:technique} that we have the matrix inequality \begin{equation} G_{W,\epsilon}(x,y) \preceq W_{+,x}^T W_{+,y} \preceq G_{W,-\epsilon}(x,y). \end{equation}

So it suffices to upper bound $G_{W,-\epsilon}(x,y)$ and lower bound $G_{W,\epsilon}(x,y)$. The two arguments are essentially identical, and we will focus on the former. We seek to prove that with high probability over $W$, for all $x,y \in S^{k-1}$ simultaneously,\[G_{W,-\epsilon}(x,y) \preceq nQ_{x,y} + \epsilon n I_k.\]

Moreover we want this to hold whenever $n \geq Ck$ for some constant $C = C(\epsilon)$. We'll use two standard concentration inequalities:

\begin{lemma}\label{lemma:basic}
Suppose that $k \leq n$. Then
\begin{enumerate}
\item $\norm{W} \leq 3\sqrt{n}$ with probability at least $1 - e^{-n/2}$.
\item $\max_{i \in [n]} \norm{w_i}_2 \leq \sqrt{2k}$ with probability at least $1 - ne^{-k/8}$.
\end{enumerate}
\end{lemma}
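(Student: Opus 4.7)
The plan is to prove the two bounds separately via textbook Gaussian concentration arguments, and both should be short.

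For part (1), my approach is to combine the standard expectation bound on the operator norm of a Gaussian matrix with Gaussian concentration of Lipschitz functions. First I would invoke the classical fact (e.g.\ Gordon/Davidson--Szarek) that for an $n \times k$ matrix with i.i.d.\ $N(0,1)$ entries, $\mathbb{E}\norm{W} \leq \sqrt{n} + \sqrt{k}$; since $k \leq n$ by hypothesis, this is at most $2\sqrt{n}$. Next, since the map $W \mapsto \norm{W}$ is $1$-Lipschitz in the Frobenius norm, Gaussian concentration gives $\Pr[\norm{W} > \mathbb{E}\norm{W} + t] \leq e^{-t^2/2}$. Plugging in $t = \sqrt{n}$ then yields $\norm{W} \leq 3\sqrt{n}$ with probability at least $1 - e^{-n/2}$, as claimed.

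For part (2), I would use the well-known tail bound for the chi-squared distribution, since $\norm{w_i}_2^2 \sim \chi^2_k$. By Laurent--Massart, $\Pr[\chi^2_k - k \geq 2\sqrt{kt} + 2t] \leq e^{-t}$ for any $t \geq 0$. Setting $t = k/8$ makes the right-hand deviation $2\sqrt{k\cdot k/8} + k/4 = 3k/4 \leq k$, which implies $\Pr[\norm{w_i}_2^2 > 2k] \leq e^{-k/8}$ for each $i$. A union bound over $i \in [n]$ then gives $\max_i \norm{w_i}_2 \leq \sqrt{2k}$ with probability at least $1 - n e^{-k/8}$.

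No real obstacle is anticipated: both parts are invocations of well-known concentration inequalities together with one inequality manipulation. The only thing to double-check is that the constants in the statement (the $3$ in $3\sqrt{n}$ and the $8$ in $e^{-k/8}$) really do come out of the natural choices of $t$ above; if a cleaner constant were desired one could adjust $t$, but for the stated bounds the above choices suffice.
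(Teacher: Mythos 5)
Your proof is correct and follows the same route the paper intends; the paper simply cites Vershynin for part (1) and says ``concentration of chi-squared'' for part (2), and your two arguments are exactly the standard ways to realize those citations. One small arithmetic slip worth fixing: with $t=k/8$, the Laurent--Massart deviation is $2\sqrt{k\cdot k/8}+2t = k/\sqrt{2}+k/4 \approx 0.957k$, not $3k/4$ (you have $2\sqrt{k^2/8}=k/\sqrt{2}$, not $k/2$); fortunately this is still at most $k$, so the conclusion $\Pr[\norm{w_i}_2^2>2k]\le e^{-k/8}$ stands unchanged.
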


\begin{proof}
See \cite{V10} for a reference on the first bound. The second bound is by concentration of chi-squared with $k$ degrees of freedom.
\end{proof}

Let $\Theta$ be the set of matrices $M \in \RR^{n \times k}$ such that  $\norm{M} \leq 3\sqrt{n}$ and $\max_{i \in [n]} \norm{M_i}_2 \leq \sqrt{2k}$. Define the random variable $\theta = W|(W \in \Theta)$.

Fix $u \in S^{k-1}$. For any $M \in \Theta$, define \[f_M(x,y) = \frac{1}{n} u^T G_{M,-\epsilon}(x,y) u\]
and define $g(x,y) = u^T Q_{x,y} u$. We check that $f$ and $g$ satisfy the three conditions of Theorem~\ref{lemma:main} with appropriate parameters.

\begin{lemma}\label{lemma:singleevent}
For any $x,y \in \RR^k$ with $\norm{x}_2,\norm{y}_2 \geq 1/2$,\[\Pr[f_\theta(x,y) \leq g(x,y) + 3\epsilon] \geq 1 - 4\exp(-Cn\epsilon^2).\]
\end{lemma}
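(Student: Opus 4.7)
\textbf{Proof plan for Lemma~\ref{lemma:singleevent}.} The plan is to write $f_W(x,y) = \frac{1}{n} \sum_{i=1}^n Z_i$ where
\[
Z_i = h_{-\epsilon}(w_i \cdot x)\, h_{-\epsilon}(w_i \cdot y)\, (u^T w_i)^2
\]
are i.i.d. nonnegative random variables, bounded pointwise by $(u^T w_i)^2$, and prove the bound in two steps: (a) control the bias $\EE f_W - g$ by $O(\epsilon)$ using the crude upper bound on $h_{-\epsilon}$, and (b) apply a Bernstein-type tail bound to the i.i.d. sum. Finally I transfer the bound from $W$ to $\theta = W\mid(W \in \Theta)$.

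For step (a), the point is that $h_{-\epsilon}(z) - \mathbbm{1}_{z>0}$ is nonnegative, bounded by $1$, and supported on $z \in [-\epsilon,0]$. Using $h_{-\epsilon}(a)h_{-\epsilon}(b) - \mathbbm{1}_{a>0}\mathbbm{1}_{b>0} \leq \mathbbm{1}_{a \in [-\epsilon,0]} + \mathbbm{1}_{b \in [-\epsilon,0]}$, I get
\[
\EE Z_i - g(x,y) \leq \EE\!\left[\mathbbm{1}_{w\cdot x \in [-\epsilon,0]}(u^T w)^2\right] + \EE\!\left[\mathbbm{1}_{w\cdot y \in [-\epsilon,0]}(u^T w)^2\right].
\]
Each term is bounded by decomposing $w$ into its component along $\hat{x} = x/\|x\|$ (which has Gaussian density at most $1/(\sqrt{2\pi}\|x\|) \leq 2/\sqrt{2\pi}$ since $\|x\| \geq 1/2$, so the slab has probability $O(\epsilon)$) and the orthogonal component (which is independent of the event and contributes $O(1)$ to $\EE[(u^T w)^2 \mid w\cdot x \in [-\epsilon,0]]$). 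This gives $\EE f_W(x,y) \leq g(x,y) + C_0\epsilon$ for an absolute constant $C_0$.

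For step (b), since $0 \leq Z_i \leq (u^T w_i)^2$ and $(u^T w_i)^2 \sim \chi^2_1$, each $Z_i$ is sub-exponential with constant Orlicz norm. Bernstein's inequality for i.i.d. sub-exponential variables gives
\[
\Pr\!\left[f_W(x,y) - \EE f_W(x,y) > t\right] \leq 2\exp(-c n \min(t^2, t))
\]
for an absolute $c>0$. Taking $t = (3-C_0)\epsilon$ (reducing constants as needed, and shrinking $c$), and combining with step (a), I obtain
\[
\Pr\!\left[f_W(x,y) > g(x,y) + 3\epsilon\right] \leq 2\exp(-Cn\epsilon^2)
\]
for some absolute $C>0$ (assuming $\epsilon \leq 1$, so the $\min$ is realized by $t^2$).

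The last step is to pass from the unconditional distribution of $W$ to $\theta = W\mid(W \in \Theta)$. By Lemma~\ref{lemma:basic}, $\Pr[W \in \Theta] \geq 1 - e^{-n/2} - ne^{-k/8} \geq 1/2$ once $n \geq Ck$ is large enough, so any event's probability under $\theta$ is at most twice its probability under $W$. This converts the factor $2$ into the claimed $4$.

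The only step with any subtlety is step (a): one must verify that $(u^T w)^2$ has bounded conditional second moment given the thin slab $\{w\cdot x \in [-\epsilon,0]\}$, which is where the assumption $\|x\|,\|y\| \geq 1/2$ is used (otherwise the slab would not have probability $O(\epsilon)$ and the Gaussian density estimate would blow up). The rest is routine application of Bernstein and a conditioning argument.
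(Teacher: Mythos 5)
Your plan matches the paper's proof in structure: bound the bias $\EE f_W - g$ by $O(\epsilon)$, apply Bernstein's inequality to the sub-exponential sum $\tfrac{1}{n}\sum_i Z_i$ with $t=\Theta(\epsilon)$, then transfer from $W$ to $\theta = W\mid(W\in\Theta)$ at the cost of a factor of $2$ via $\Pr[W\in\Theta]\geq 1/2$. The only difference is that where the paper simply cites Lemma~12 of \cite{HV19} for the bias estimate $\EE f_W(x,y)\leq g(x,y)+2\epsilon$, you re-derive a bound of the same order from scratch via the slab decomposition; your hand-wave about ``reducing constants as needed'' does work out, since writing $u = a\hat{x}+bv$ with $v\perp\hat{x}$, $a^2+b^2=1$ kills the cross term and gives each slab a contribution of at most $\bigl(a^2(\epsilon/\norm{x})^2+b^2\bigr)\Pr[\hat{x}\cdot w\in[-\epsilon/\norm{x},0]]\leq 2\epsilon/\sqrt{2\pi}$ once $\epsilon\leq 1/2$, so $C_0\leq 4/\sqrt{2\pi}<2<3$ and the Bernstein step with $t=(3-C_0)\epsilon$ is valid.
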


\begin{proof}
We first consider $f_W(x,y)$. It is shown in the proof of Lemma~12 in \cite{HV19} that\[\EE[G_{W,-\epsilon}(x,y)] \preceq Q_{x,y} + \left(\frac{\epsilon}{2\norm{x}_2} + \frac{\epsilon}{2\norm{y}_2}\right)I,\]
which under our assumptions implies that $\EE[f_W(x,y)] \leq g(x,y) + 2\epsilon$. Now expand \[ f_W(x,y) = \frac{1}{n} \sum_{i=1}^n h_{-\epsilon}(w_ix) h_{-\epsilon}(w_iy) (w_iu)^2. \] Let $z_i = h_{-\epsilon}(w_ix)h_{-\epsilon}(w_iy)(w_iu)^2$. Since $h_{-\epsilon}(w_ix)h_{-\epsilon}(w_iy)$ is bounded and $w_iu$ is Gaussian, and $\EE z_i$ is bounded by a constant, it follows that $z_i - \EE z_i$ is subexponential with some constant variance proxy $\sigma$. Therefore by Bernstein's inequality, for all $t>0$,\[\Pr[f_W(x,y) - \EE f_W(x,y) > t] \leq 2\exp(-Cn \min(t, t^2))\] for some constant $C>0$.
Taking $t = \epsilon$, we get that\[\Pr[f_W(x,y) > g(x,y) + 3\epsilon] \leq 2\exp(-Cn\epsilon^2).\]
Finally, since $\Pr[W \in \Theta] \geq 1/2$, it follows that conditioning on $\Theta$ at most doubles the failure probability. So\[\Pr[f_\theta(x,y) > g(x,y) + 3\epsilon] \leq 4\exp(-Cn\epsilon^2)\] as desired.
\end{proof}

Next, we show that $\{f_M\}_{M \in \Theta}$ is $(\epsilon, \delta, \gamma)$-pseudo-Lipschitz where $\delta = \epsilon^2/116$ and $\gamma = 1/2$.

\begin{definition}
For any $M \in \RR^{n \times k}$, $t \in \RR$, and $u \in \RR^k$, let $B_{M,t,u} \subseteq \RR^k$ be the set of points $v \in \RR^k$ such that\[\sum_{i=1}^n |M_i v| (M_i u)^2 \leq t n.\]
\end{definition}

The pseudo-ball $B_{M,t,u}$ captures the directions in which $f_M$ is Lipschitz more effectively than any spherical ball, as the following lemma shows.

\begin{lemma}\label{lemma:perturbxy}
Let $M \in \RR^{n \times k}$. Let $x,y,\tilde{x},\tilde{y} \in \RR^k$. If $y - \tilde{y} \in B_{M,\epsilon^2/4, u}$ and $x - \tilde{x} \in \CB_{M,\epsilon^2/4, u}$ then\[|f_M(x,y) - f_M(\tilde{x},\tilde{y})| \leq \epsilon.\]
\end{lemma}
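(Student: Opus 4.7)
The plan is to bound $|f_M(x,y) - f_M(\tilde{x},\tilde{y})|$ by the triangle inequality through the intermediate point $(\tilde{x}, y)$, and then bound each of the two resulting terms $|f_M(x,y) - f_M(\tilde{x},y)|$ and $|f_M(\tilde{x},y) - f_M(\tilde{x},\tilde{y})|$ by $\epsilon/2$. By symmetry of the definition of $f_M$ and of the pseudo-ball $B_{M,\epsilon^2/4,u}$ in the two arguments, it suffices to handle one of them; I focus on $|f_M(x,y) - f_M(\tilde{x},y)|$.

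First I would observe that the function $h_{-\epsilon}$ is $(1/\epsilon)$-Lipschitz (the only nontrivial piece is the linear ramp on $[-\epsilon,0]$ of slope $1/\epsilon$), and that $|h_{-\epsilon}| \leq 1$ everywhere. Expanding the definition
\[ f_M(x,y) - f_M(\tilde{x},y) = \frac{1}{n}\sum_{i=1}^n \bigl(h_{-\epsilon}(M_i x) - h_{-\epsilon}(M_i \tilde{x})\bigr)\, h_{-\epsilon}(M_i y)\, (M_i u)^2, \]
the Lipschitz bound on $h_{-\epsilon}$ together with $|h_{-\epsilon}(M_i y)| \leq 1$ yields
\[ |f_M(x,y) - f_M(\tilde{x},y)| \leq \frac{1}{n\epsilon} \sum_{i=1}^n |M_i(x-\tilde{x})|\,(M_i u)^2. \]
Now invoke the hypothesis $x - \tilde{x} \in B_{M,\epsilon^2/4,u}$, which is exactly the statement that the sum on the right is at most $(\epsilon^2/4)\, n$. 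This gives $|f_M(x,y) - f_M(\tilde{x},y)| \leq \epsilon/4$.

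An identical argument (with the roles of the first and second argument swapped, using $y - \tilde{y} \in B_{M,\epsilon^2/4,u}$ and again $|h_{-\epsilon}(M_i \tilde{x})| \leq 1$) gives $|f_M(\tilde{x},y) - f_M(\tilde{x},\tilde{y})| \leq \epsilon/4$. Adding the two via the triangle inequality yields $|f_M(x,y) - f_M(\tilde{x},\tilde{y})| \leq \epsilon/2 \leq \epsilon$, which is the desired bound.

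There is no real obstacle here; the lemma is essentially a clean unwinding of the definition of $B_{M,t,u}$ to convert the ``effective Lipschitz constant'' of $f_M$ from the naive $\Theta(\sqrt{k})$ bound (which would come from using $\|M_i(x-\tilde{x})\| \leq \|M_i\|\|x-\tilde{x}\|$ together with the row-norm bound on $M \in \Theta$) into a $k$-independent bound, at the cost of restricting the perturbation direction to lie in the pseudo-ball rather than an ordinary Euclidean ball. The only thing to be a little careful about is not losing the factor of $1/\epsilon$ from the Lipschitz constant of $h_{-\epsilon}$: this is precisely why the pseudo-ball is defined with radius $\epsilon^2/4$ rather than $\epsilon/4$, so that the $1/\epsilon$ from the Lipschitz bound and the $\epsilon^2$ from the pseudo-ball combine to give a clean $\epsilon$.
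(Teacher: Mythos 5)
Your proof is correct and takes essentially the same route as the paper's: both split via the triangle inequality through an intermediate point (you use $(\tilde{x},y)$, the paper uses $(x,\tilde{y})$, which are equivalent by symmetry), bound $|h_{-\epsilon}| \leq 1$, apply the $1/\epsilon$-Lipschitzness of $h_{-\epsilon}$, and finish with the defining inequality of $B_{M,\epsilon^2/4,u}$. Your bookkeeping even gives the slightly tighter $\epsilon/2$, matching what the paper's chain actually produces before the stated final bound of $\epsilon$.
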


\begin{proof}
We have
\begin{align*}
|f_M(x,y) - f_M(\tx,\ty)|
&\leq \frac{1}{n}\sum_{i=1}^n |h_{-\epsilon}(M_ix) h_{-\epsilon}(M_iy) - h_{-\epsilon}(M_i\tx) h_{-\epsilon}(M_i\ty)| (M_iu)^2 \\
&\leq \frac{1}{n}\sum_{i=1}^n \big[h_{-\epsilon}(M_ix)|h_{-\epsilon}(M_iy) - h_{-\epsilon}(M_i\ty)| +\\
&\qquad h_{-\epsilon}(M_i\ty)|h_{-\epsilon}(M_ix) - h_{-\epsilon}(M_i\tx)|\big] (M_iu)^2 \\
&\leq \frac{1}{n}\sum_{i=1}^n \left[|h_{-\epsilon}(M_iy) - h_{-\epsilon}(M_i\ty)| + |h_{-\epsilon}(M_ix) - h_{-\epsilon}(M_i\tx)|\right] (M_iu)^2 \\
&\leq \frac{1}{\epsilon n} \sum_{i=1}^n \left[|M_i(y-\ty)| + |M_i(x - \tx)|\right](M_iu)^2 \\
&\leq \epsilon
\end{align*}
where the second-to-last inequality uses that $h_{-\epsilon}$ is $1/\epsilon$-Lipschitz, and the last inequality uses the assumptions on $y-\ty$ and $x-\tx$.
\end{proof}

Next, we need to lower bound the volume of $B_{M,t,u}$.

\begin{lemma}\label{lemma:volume}
Fix $u \in S^{k-1}$ and $\delta,t > 0$. Then $\{B_{M,t,u}\}_{M \in \theta}$ is $(\delta,\gamma)$-wide for $\gamma = \delta^k (1 - 41\delta t^{-1})$. Fix $M \in \Theta$ and $u \in S^{k-1}$. For any $\delta, t > 0$,\[\Vol(B_{M,t,u} \cap \delta \CB) \geq \delta^k (1 - 41\delta t^{-1})\Vol(\CB).\]
\end{lemma}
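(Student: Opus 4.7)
The plan is to prove the volume lower bound via a simple Markov-type argument applied to a uniformly random point in $\delta \CB$. The set $B_{M,t,u}$ is the sublevel set of the convex, symmetric function $v \mapsto \frac{1}{n}\sum_i |M_iv|(M_iu)^2$, so it suffices to show that the expectation of this function under $v \sim \Unif(\delta \CB)$ is $O(\delta)$; then Markov's inequality immediately gives that a $(1 - O(\delta/t))$-fraction of $\delta\CB$ lies in $B_{M,t,u}$. The first assertion of the lemma (the wideness statement) then follows from the second via $\Vol(\delta\CB) = \delta^k \Vol(\CB)$.

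For the expectation, I would first bound each row contribution by rotational symmetry: if $v \sim \Unif(\delta\CB)$, then
\[
\EE_v |M_i v| \;=\; \|M_i\|\cdot\delta \cdot \EE_{w \sim \Unif(\CB)}|w_1| \;\leq\; \|M_i\|\cdot\delta \cdot \sqrt{\EE_{w}w_1^2} \;=\; \frac{\delta\, \|M_i\|}{\sqrt{k+2}},
\]
using Cauchy--Schwarz and the fact that $\EE w_1^2 = 1/(k+2)$ for $w$ uniform in the unit ball. Summing over $i$ and invoking the two defining properties of $\Theta$, namely $\|M_i\| \leq \sqrt{2k}$ for all $i$ and $\|Mu\|^2 \leq \|M\|^2 \leq 9n$, yields
\[
\EE_v \sum_{i=1}^n |M_iv|(M_iu)^2 \;\leq\; \frac{\delta}{\sqrt{k+2}}\sum_{i=1}^n \|M_i\|(M_iu)^2 \;\leq\; \frac{\delta\sqrt{2k}}{\sqrt{k+2}}\, \|Mu\|^2 \;\leq\; 9\sqrt{2}\,\delta n.
\]
Note the crucial cancellation: $\sqrt{2k/(k+2)}$ is an absolute constant, which is exactly why the volume bound has no $k$-dependence---this is the whole point of this lemma from the perspective of Theorem~\ref{lemma:main}.

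Finally, Markov's inequality at threshold $tn$ gives
\[
\Pr_{v \sim \Unif(\delta\CB)}[v \notin B_{M,t,u}] \;\leq\; \frac{9\sqrt{2}\,\delta}{t},
\]
so a fraction at least $1 - 9\sqrt{2}\,\delta/t \geq 1 - 41\delta/t$ of the volume of $\delta\CB$ lies in $B_{M,t,u}$, which is the desired bound. Since the estimate is uniform in $M \in \Theta$ (only the defining bounds $\|M_i\| \leq \sqrt{2k}$ and $\|M\| \leq 3\sqrt{n}$ were used), wideness of the whole set system $\{B_{M,t,u}\}_{M \in \Theta}$ follows at once.

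The proof is essentially mechanical once the approach is fixed; there is no real obstacle beyond bookkeeping the constant. The conceptual content is that $B_{M,t,u}$ is an \emph{anisotropic} body (long in directions orthogonal to rows with large $(M_iu)^2$) whose geometry is tailored to $M$, so replacing it by a spherical ball would force a radius scaling like $1/\sqrt{k}$ and lose exactly the factor this lemma is designed to preserve. This is the key quantitative input that makes the pseudo-Lipschitz framework of Theorem~\ref{lemma:main} strictly more powerful than the classical Lipschitz $\epsilon$-net argument.
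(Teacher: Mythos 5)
Your proof is correct, and it is cleaner than the paper's. Both arguments ultimately reduce to Markov's inequality applied to the random variable $\sum_i |M_iv|(M_iu)^2$ for $v$ uniform on $\delta\CB$, so the core idea is the same, but the route to the expectation bound is genuinely different. The paper passes from $\Unif(\delta\CB)$ to $\Unif(\delta S^{k-1})$ (by monotonicity in $\norm{\eta}$), then to $N(0,I_k)$ via spherical symmetry of the Gaussian, and then has to condition on $\norm{\eta}_2 \geq \sqrt{k}/2$ to make the resulting random threshold deterministic --- three distribution changes and an auxiliary concentration step. You instead compute $\EE_{v\sim\Unif(\delta\CB)}|M_iv|$ directly by rotational symmetry together with Cauchy--Schwarz and the exact second moment $\EE w_1^2 = 1/(k+2)$ for $w \sim \Unif(\CB)$, which immediately gives $\EE|M_iv| \leq \delta\norm{M_i}/\sqrt{k+2}$. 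The dimension-cancellation that is the entire point of the lemma --- $\norm{M_i} \leq \sqrt{2k}$ cancels against $1/\sqrt{k+2}$ --- appears transparently in one line. You also get a better constant ($9\sqrt{2} \approx 12.7$ versus the paper's $72/\sqrt{\pi} \approx 40.6$), both of which are covered by the stated $41$. The convexity/symmetry of $B_{M,t,u}$ you correctly dispatch by noting it is a sublevel set of a symmetric convex function. One cosmetic remark: the lemma statement's $\gamma = \delta^k(1-41\delta t^{-1})$ appears to carry a spurious $\delta^k$ factor relative to the definition of $(\delta,\gamma)$-wideness, since the volume bound gives $\Vol(B\cap\delta\CB) \geq (1-41\delta t^{-1})\Vol(\delta\CB)$; your proof correctly establishes the latter, which is what is used downstream when instantiating $\gamma = 1/2$.
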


\begin{proof}
Fix $M \in \Theta$. It's clear from the definition that $B_{M,t,u}$ is symmetric (i.e. $v \in B_{M,t,u}$ implies $-v \in B_{M,t,u}$) and convex (by the triangle inequality). It remains to show that \[\Vol(B_{M,t,u} \cap \delta \CB) \geq \delta^k (1 - 41\delta t^{-1})\Vol(\CB).\]

Writing the $\Vol(B_{M,t,u})$ as a probability,
\begin{align*}
\frac{\Vol(B_{M,t, u} \cap \delta \CB)}{\Vol(\CB)}
&= \delta^k \frac{\Vol(B_{M,t, u} \cap \delta \CB)}{\Vol(\delta \CB)} \\
&= \delta^k \Pr_{\eta \in \delta \CB}\left[\sum_{i=1}^n |M_i \eta| (M_i u)^2 \leq tn\right].
\end{align*}
Since increasing $\norm{\eta}$ only increases the sum, the probability over $\delta \CB$ is at least the probability over the sphere $\{\eta \in \RR^k: \norm{\eta}_2 = \delta\}$. Then
\begin{align*}
\Pr_{\eta \in \delta \CB}\left[\sum_{i=1}^n |M_i \eta| (M_i u)^2 \leq t n\right]
&\geq \Pr_{\norm{\eta}_2 = \delta}\left[\sum_{i=1}^n |M_i \eta| (M_i u)^2 \leq t n\right] \\
&= \Pr_{\eta \sim N(0,1)^k}\left[\sum_{i=1}^n |M_i \eta| (M_i u)^2 \leq t n \norm{\eta}_2/\delta \right]
\end{align*}
by spherical symmetry of the Gaussian. Now we upper bound the probability of the complementary event \[ \Pr_{\eta \sim N(0,1)^k}\left[\sum_{i=1}^n |M_i \eta| (M_i u)^2 > t n \norm{\eta}_2/\delta \right],\] with the aim of making the right-hand side of the inequality deterministic:
\begin{align*}
&\Pr_{\eta \sim N(0,1)^k}\left[\sum_{i=1}^n |M_i \eta| (M_i u)^2 > t n \norm{\eta}_2/\delta \right] \\
&= \Pr_{\eta \sim N(0,1)^k}\left[\sum_{i=1}^n |M_i \eta| (M_i u)^2 > t n \norm{\eta}_2/\delta \middle| \norm{\eta}_2 \geq \sqrt{k}/2 \right] \\
&= \frac{\Pr_{\eta \sim N(0,1)^k}\left[\sum_{i=1}^n |M_i \eta| (M_i u)^2 > t n \norm{\eta}_2/\delta \land \norm{\eta}_2 \geq \sqrt{k}/2 \right]}{\Pr[\norm{\eta}_2 \geq \sqrt{k}/2]} \\
&\leq \frac{\Pr_{\eta \sim N(0,1)^k}\left[\sum_{i=1}^n |M_i \eta| (M_i u)^2 > t n \sqrt{k}/(2\delta) \right]}{1 - e^{-9k/128}} \\
&\leq 2\Pr_{\eta \sim N(0,1)^k}\left[\sum_{i=1}^n |M_i \eta| (M_i u)^2 > t n \sqrt{k}/(2\delta) \right]
\end{align*}
For each $i \in [n]$, the random variable $M_i\eta$ is distributed as $N(0, \norm{M_i}^2)$. Since $M \in \Theta$, we have $\norm{M_i} \leq \sqrt{2k}$. Thus, $|M_i\eta|$ has mean at most $\norm{M_i} \sqrt{2/\pi} \leq 2\sqrt{k/\pi}$. So by Markov's inequality,
\begin{align*}
\Pr_{\eta \sim N(0,1)^k}\left[\sum_{i=1}^n |M_i \eta| (M_i u)^2 > t n \sqrt{k}/(2\delta) \right]
&\leq \frac{\EE_{\eta \sim N(0,1)^k}\left[\sum_{i=1}^n |M_i\eta| (M_iu)^2 \right]}{t n \sqrt{k}/(2\delta)} \\
&\leq \frac{\sum_{i=1}^n (M_iu)^2 2\sqrt{k/\pi}}{t n\sqrt{k}/(2\delta)} \\
&\leq 36\delta/(t \sqrt{\pi})
\end{align*}
where the last inequality uses the assumption $M \in \Theta$ to get $\norm{Mu}^2 \leq 9n\norm{u}^2 = 9n$. We conclude that\[\frac{\Vol(B_{M,t,u} \cap \delta \CB)}{\Vol(\CB)} \geq \delta^k (1 - 72\delta /(t\sqrt{\pi}))\] as desired.
\end{proof}

Taking $t = \epsilon^2$ and $\delta = \epsilon^2/82$ yields \[\Vol(B_{M,t,u} \cap \delta \CB) \geq \frac{1}{2} \Vol(\delta \CB).\]
So $\{f_M\}_{M \in \Theta}$ is $(2\epsilon, \epsilon^2/82, 1/2)$-pseudo-Lipschitz. Finally, we have a lemma about the smoothness of $g$; see Appendix~\ref{app:q} for the proof.

\begin{lemma}\label{lemma:q}
Let $x,y,\tilde{x},\tilde{y} \in (3/2)\CB \setminus (1/2)\CB$. Let $d = \max(\norm{x-y},\norm{\tx-\ty})$. Then\[\norm{Q_{x,y} - Q_{\tx,\ty}} \leq O(\norm{x-\tx}_2 + \norm{y - \ty}_2).\]
\end{lemma}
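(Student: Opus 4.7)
The plan is to reduce the claim to a Lipschitz bound for $Q$ restricted to unit-norm arguments, and then bound it via a variational argument using the symmetry of Gaussians.

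First, observe that $Q_{x,y}$ depends only on $\hat{x} := x/\norm{x}$ and $\hat{y} := y/\norm{y}$. Indeed, writing the rows $w \sim N(0,1)^k$ of $W$,
\[
Q_{x,y} = \EE\bigl[\mathbbm{1}_{w\cdot x > 0}\,\mathbbm{1}_{w\cdot y > 0}\, w w^T\bigr] = \EE\bigl[\mathbbm{1}_{w\cdot \hat x > 0}\,\mathbbm{1}_{w\cdot \hat y > 0}\, w w^T\bigr] = Q_{\hat x,\hat y},
\]
since the indicators are scale-invariant in $x,y$. Second, the normalization map is Lipschitz on $(3/2)\CB \setminus (1/2)\CB$: a standard triangle-inequality computation (using $\norm{x},\norm{\tilde x}\ge 1/2$) yields $\norm{\hat x - \hat{\tilde x}} \le 4\norm{x-\tilde x}$ and analogously for $y,\tilde y$. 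Hence it suffices to establish the bound $\norm{Q_{\hat x,\hat y} - Q_{\hat{\tilde x},\hat{\tilde y}}} \le O(\norm{\hat x - \hat{\tilde x}} + \norm{\hat y - \hat{\tilde y}})$ for unit vectors.

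For the unit-vector bound, I use the variational characterization of the operator norm. For any unit $v\in\RR^k$,
\[
v^T(Q_{\hat x,\hat y} - Q_{\hat{\tilde x},\hat{\tilde y}})v = \EE\bigl[\bigl(\mathbbm{1}_{w\cdot \hat x > 0}\mathbbm{1}_{w\cdot \hat y > 0} - \mathbbm{1}_{w\cdot \hat{\tilde x} > 0}\mathbbm{1}_{w\cdot \hat{\tilde y} > 0}\bigr)(w\cdot v)^2\bigr].
\]
Splitting the telescoped difference of indicator products and taking absolute values, this is at most $\EE[\mathbbm{1}_{E_x}(w\cdot v)^2] + \EE[\mathbbm{1}_{E_y}(w\cdot v)^2]$, where $E_x$ is the wedge event $\sign(w\cdot\hat x) \ne \sign(w\cdot\hat{\tilde x})$ and similarly for $E_y$.

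The main step is to show $\EE[\mathbbm{1}_{E_x}(w\cdot v)^2] \le O(\theta_x)$, where $\theta_x$ is the angle between $\hat x$ and $\hat{\tilde x}$. To this end, decompose $w = w_\parallel + w_\perp$ with $w_\parallel$ the projection onto $\vspan(\hat x,\hat{\tilde x})$ and $w_\perp$ its orthogonal complement; the event $E_x$ depends only on $w_\parallel$, and $w_\perp$ is independent of $w_\parallel$ with $\EE[(v_\perp\cdot w_\perp)^2]=\norm{v_\perp}^2 \le 1$. Hence the cross term vanishes in expectation and
\[
\EE[\mathbbm{1}_{E_x}(w\cdot v)^2] = \EE[\mathbbm{1}_{E_x}(v_\parallel\cdot w_\parallel)^2] + \Pr[E_x]\,\norm{v_\perp}^2 \le \EE[\mathbbm{1}_{E_x}\norm{w_\parallel}^2] + \Pr[E_x].
\]
Writing $w_\parallel = r u$ in polar coordinates with $r\sim\chi_2$ and $u$ uniform on $S^1$, independent, the wedge $E_x$ depends only on $u$ and has angular measure $2\theta_x$, giving $\Pr[E_x] = \theta_x/\pi$ and $\EE[\mathbbm{1}_{E_x}\norm{w_\parallel}^2] = \EE[r^2]\cdot\theta_x/\pi = 2\theta_x/\pi$. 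Combining, $\EE[\mathbbm{1}_{E_x}(w\cdot v)^2] \le 3\theta_x/\pi = O(\norm{\hat x - \hat{\tilde x}})$ (since $\theta_x \le \pi\norm{\hat x - \hat{\tilde x}}/2$ for unit vectors). Summing the $x$ and $y$ contributions and taking the sup over unit $v$ gives the desired bound.

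The one thing to verify carefully is the variance decomposition in the polar step, to make sure no dimension-dependent factor sneaks in; taking the operator norm via $\sup_v v^T(\cdot)v$ rather than bounding $\EE\norm{\cdot}$ directly is precisely what avoids a spurious factor of $k$.
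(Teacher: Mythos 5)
Your proof is correct, and it takes a genuinely different route from the paper's. The paper starts from the closed-form identity $Q_{x,y} = \frac{\pi - \angle(x,y)}{2\pi}I_k + \frac{\sin \angle(x,y)}{2\pi} M_{x,y}$ (where $M_{x,y}$ is the ``swap'' map on $\vspan\{\hat x,\hat y\}$), then proves Lipschitzness of $(\sin\theta)M_{x,y}$ by fixing an explicit orthonormal basis for $\vspan\{x,y,\tilde y\}$ and bounding all three coordinates of $(\sin\theta)M_{x,y}u - (\sin\tilde\theta)M_{x,\tilde y}u$ by hand, with separate handling of degenerate angles. Your argument bypasses the closed form entirely: you use scale-invariance of the indicators, then for unit arguments write $v^T(Q_{\hat x,\hat y}-Q_{\hat{\tilde x},\hat{\tilde y}})v$ as an expectation, telescope the indicator products to reduce to two single-vector wedge events, split $w$ into its projection onto $\vspan(\hat x,\hat{\tilde x})$ and its orthogonal complement, and compute the wedge contribution exactly in polar coordinates. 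The variational step is legitimate because $Q_{x,y}$ (and hence the difference) is symmetric, so $\norm{\cdot}=\sup_{\|v\|=1}|v^T(\cdot)v|$. All the numerics check out: $\Pr[E_x]=\theta_x/\pi$, $\EE r^2 = 2$, $\theta_x \le \pi\sin(\theta_x/2) = \tfrac{\pi}{2}\|\hat x-\hat{\tilde x}\|$, and normalization is $4$-Lipschitz on $\{\|x\|\ge 1/2\}$. Your approach is shorter, needs no case analysis for collinear inputs, and makes the dimension-independence transparent at the point you flag (by pairing $\sup_v v^T(\cdot)v$ with the 2D wedge). The paper's coordinate computation, by contrast, carries the advantage of producing the explicit constants ($\sqrt{79}$, etc.) and directly reuses the same closed form that shows up elsewhere in the landscape analysis.
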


Thus, we have $|g(x,y) - g(\tx,\ty)| \leq C\epsilon$ whenever $x,y \in S^{k-1}$ and $\tx - x, \ty - y \in \delta \CB$. Having shown that the three conditions of Theorem~\ref{lemma:main} are satisfied, we can prove uniform concentration over all $x,y \in S^{k-1}$.

\begin{lemma}
Let $\epsilon > 0$. Fix $u \in S^{k-1}$. Then with probability at least $1 - (C/\epsilon)^{8k}e^{-cn\epsilon^2}$ (over $\theta$), it holds that for all $x,y \in S^{k-1}$,\[f_\theta(x,y) \leq g(x,y) + C\epsilon n.\]
\end{lemma}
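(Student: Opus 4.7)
The plan is to apply Theorem~\ref{lemma:main} directly with $d=2$ to the function family $\{f_M\}_{M \in \Theta}$ and target function $g$ defined above, with parameter choices $\delta = \epsilon^2/82$, $\gamma = 1/2$, and $p = 4\exp(-cn\epsilon^2)$. All three hypotheses of that theorem have essentially been established by the lemmas immediately preceding this statement; what remains is really a bookkeeping exercise to confirm the claimed failure probability of $(C/\epsilon)^{8k} e^{-cn\epsilon^2}$.

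First I would verify condition~(1): since $\delta < 1$, every point of the shell $\CH = (1+\delta/2)\CB \setminus (1-\delta/2)\CB$ has norm at least $1/2$, so Lemma~\ref{lemma:singleevent} applies to any $(x,y) \in \CH^2$ and gives $\Pr_\theta[f_\theta(x,y) > g(x,y) + 3\epsilon] \leq 4\exp(-cn\epsilon^2)$. Condition~(2), the $(2\epsilon, \epsilon^2/82, 1/2)$-pseudo-Lipschitz property, is obtained by combining Lemma~\ref{lemma:perturbxy} with Lemma~\ref{lemma:volume} instantiated at $t = \epsilon^2$; this is precisely the statement recorded in the paragraph above the lemma. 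Condition~(3), the near-Lipschitz control on $g$, is exactly the content of Lemma~\ref{lemma:q}, which yields $D = O(\delta) = O(\epsilon)$. After rescaling $\epsilon$ by a constant factor to reconcile the $+3\epsilon$ slack of~(1) with the $+2\epsilon$ slack of~(2), the conclusion of Theorem~\ref{lemma:main} then reads $f_\theta(x,y) \leq g(x,y) + O(\epsilon)$ uniformly in $(x,y) \in (S^{k-1})^2$, with failure probability at most
\[ \gamma^{-2d}(4/\delta)^{2kd}\, p \;=\; 16 \cdot \left(\frac{328}{\epsilon^2}\right)^{4k} \cdot 4\exp(-cn\epsilon^2) \;\leq\; \left(\frac{C}{\epsilon}\right)^{8k} e^{-cn\epsilon^2}, \]
matching the claimed bound.

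There is no substantive obstacle here; the technical content has already been absorbed into Theorem~\ref{lemma:main} and into the preceding verifications of its hypotheses. The one point worth highlighting is that the factor $1/\delta = 82/\epsilon^2$ in the pseudo-ball diameter is precisely what produces the exponent $8k$ in $(C/\epsilon)^{8k}$: a naive spherical $\epsilon$-net using the true Lipschitz constant $\Theta(\sqrt{k})$ of $f_M$ would instead yield $k \log k$ in the exponent, forcing $n \gtrsim k\log k$. Avoiding that loss is exactly the reason the pseudo-Lipschitz framework was introduced.
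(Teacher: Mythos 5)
Your proof is correct and follows the same route as the paper: apply Theorem~\ref{lemma:main} with $d = 2$, using Lemma~\ref{lemma:singleevent} for condition~(1), the established $(2\epsilon, \epsilon^2/82, 1/2)$-pseudo-Lipschitzness for condition~(2), and Lemma~\ref{lemma:q} for condition~(3). You supply the explicit numeric bookkeeping that the paper leaves implicit (and correctly note that the conclusion should read $g(x,y)+O(\epsilon)$ rather than $g(x,y)+C\epsilon n$, given the $1/n$ normalization already baked into $f_M$), but there is no substantive difference in approach.
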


\begin{proof}
Apply Theorem~\ref{lemma:main} to the family $\{f_M\}_{M \in \Theta}$ and random variable $\theta$. By Lemma~\ref{lemma:singleevent}, we can take $p = \exp(-cn \epsilon^2)$ for a constant $c>0$. We know that $\{f_{x,y}\}$ is $(2\epsilon, \epsilon^2/82, 1/2)$-pseudo-Lipschitz. By Lemma~\ref{lemma:q}, we can take $D = C\epsilon$. The claim follows.
\end{proof}

Now we get a uniform bound over all $u \in S^{k-1}$, via a standard $\epsilon$-net an Lipschitz bound. Adding notation for clarity, define\[f_M(x,y,u) = \frac{1}{n} u^T G_{M,-\epsilon}(x,y)u.\]

The following lemma shows that $f$ is Lipschitz in $u$.

\begin{lemma}\label{lemma:perturbu}
Let $M \in \Theta$. Let $x,y \in \RR^k$ and $u,v \in S^{k-1}$. Then\[|f_M(x,y,u) - f_M(x,y,v)| \leq 18n \norm{u-v}_2\]
\end{lemma}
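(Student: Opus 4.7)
The plan is direct: expand the difference $f_M(x,y,u) - f_M(x,y,v)$ as a sum over the rows of $M$, use the algebraic identity $(M_i u)^2 - (M_i v)^2 = (M_i(u-v))(M_i(u+v))$ together with the pointwise bound $|h_{-\epsilon}| \leq 1$, and then finish with Cauchy--Schwarz plus the operator norm bound $\|M\| \leq 3\sqrt{n}$ guaranteed by $M \in \Theta$.

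In more detail, first I would unpack $f_M(x,y,u) = \frac{1}{n}\sum_{i=1}^n h_{-\epsilon}(M_i x)\, h_{-\epsilon}(M_i y)\, (M_i u)^2$ and write
\[
f_M(x,y,u) - f_M(x,y,v) = \frac{1}{n}\sum_{i=1}^n h_{-\epsilon}(M_i x)\, h_{-\epsilon}(M_i y)\,\bigl[(M_i u)^2 - (M_i v)^2\bigr].
\]
Since $h_{-\epsilon}$ takes values in $[0,1]$, the factor $h_{-\epsilon}(M_i x) h_{-\epsilon}(M_i y)$ can be absorbed into an absolute value, and the difference of squares factors as $(M_i(u-v))(M_i(u+v))$. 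Taking absolute values gives
\[
|f_M(x,y,u) - f_M(x,y,v)| \leq \frac{1}{n}\sum_{i=1}^n |M_i(u-v)|\,|M_i(u+v)|.
\]

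Next I would apply Cauchy--Schwarz to the sum, turning it into $\|M(u-v)\|_2 \cdot \|M(u+v)\|_2$. The membership $M \in \Theta$ yields $\|M\| \leq 3\sqrt{n}$, so $\|M(u-v)\|_2 \leq 3\sqrt{n}\,\|u-v\|_2$ and $\|M(u+v)\|_2 \leq 3\sqrt{n}\,\|u+v\|_2 \leq 6\sqrt{n}$ using the triangle inequality and $\|u\| = \|v\| = 1$. Combining these bounds and cancelling one factor of $n$ with the normalization yields exactly the claimed estimate. There is no serious obstacle here; the only thing to keep an eye on is the interaction between the normalization in the definition of $f_M$ and the $n$ that appears on the right hand side of the bound, which is tracked correctly as long as one remembers that $\|M\|^2 \lesssim n$ eats the $1/n$.
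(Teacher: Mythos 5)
Your proof is correct and identical in structure to the paper's: bound $h_{-\epsilon}$ by $1$, factor the difference of squares, apply Cauchy--Schwarz, and invoke $\norm{M} \leq 3\sqrt{n}$. One small point to tidy up: carrying the $1/n$ normalization through, as you do, actually yields $18\norm{u-v}_2$, which is stronger than the stated $18n\norm{u-v}_2$ and not ``exactly'' it; the paper's own proof silently drops the $1/n$ prefactor after the first line, so the extra $n$ in the lemma statement appears to be an artifact, and indeed the later $\epsilon$-net step in Theorem~\ref{thm:gupper} relies on the sharper $O(\norm{u-v}_2)$ form you derive.
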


\begin{proof}
We have
\begin{align*}
|f_M(x,y,u) - f_M(x,y,v)|
&\leq \sum_{i=1}^n h_{-\epsilon}(M_ix) h_{-\epsilon}(M_iy) \left|(M_iu)^2 - (M_iv)^2\right| \\
&\leq \sum_{i=1}^n \left|M_i(u-v)\right| \cdot \left|M_i(u+v)\right| \\
&\leq \norm{M(u-v)}_2 \norm{M(u+v)}_2 \\
&\leq \norm{M}^2 \norm{u-v}_2 \norm{u+v}_2 \\
&\leq 18n\norm{u-v}_2
\end{align*}
by an application of Cauchy-Schwarz, and the bound $\norm{M} \leq 3\sqrt{n}$.
\end{proof}

The matrix bound on $G_{W,-\epsilon}(x,y)$ follows from applying an $\epsilon$-net together with Lemma~\ref{lemma:perturbu}, and finally removing the conditioning on $\Theta$.

\begin{theorem}\label{thm:gupper}
Let $\epsilon > 0$. Then with probability at least $1 - (C/\epsilon)^{8k}e^{-cn\epsilon^2} - ne^{-k/8}$ over $W$, it holds that for all $x,y \in S^{k-1}$,\[\frac{1}{n} G_{W,-\epsilon}(x,y) \preceq Q_{x,y} + \epsilon.\]
\end{theorem}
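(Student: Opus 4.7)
}
The previous lemma already handles the uniformity over $x,y \in S^{k-1}$ for any \emph{fixed} $u \in S^{k-1}$, in the conditioned measure $\theta = W \mid (W \in \Theta)$. So the remaining tasks are: (i) upgrade the fixed-$u$ statement to a uniform-in-$u$ statement by a classical $\epsilon$-net argument built on the Lipschitz bound of Lemma~\ref{lemma:perturbu}; and (ii) transfer the resulting bound from the conditioned variable $\theta$ back to $W$ using Lemma~\ref{lemma:basic}.

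For step (i), I would fix a $\rho$-net $\mathcal{N}_u \subseteq S^{k-1}$ of cardinality at most $(3/\rho)^k$, with $\rho$ a small constant multiple of $\epsilon$. Applying the preceding lemma once per element of $\mathcal{N}_u$ and taking a union bound, with probability at least $1 - (3/\rho)^k (C/\epsilon)^{8k} e^{-cn\epsilon^2}$ over $\theta$, the inequality $f_\theta(x,y,u) \leq g(x,y,u) + C\epsilon$ holds simultaneously for every $u \in \mathcal{N}_u$ and every $x,y \in S^{k-1}$. For an arbitrary $u \in S^{k-1}$, pick $u' \in \mathcal{N}_u$ with $\norm{u-u'} \leq \rho$; Lemma~\ref{lemma:perturbu} controls $|f_\theta(x,y,u) - f_\theta(x,y,u')|$ by $O(\rho)$ (after accounting for the $1/n$ normalization built into $f$), and because $\norm{Q_{x,y}}$ is bounded on $S^{k-1}\times S^{k-1}$, we similarly have $|g(x,y,u) - g(x,y,u')| = O(\rho)$. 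Choosing $\rho = \Theta(\epsilon)$ absorbs both perturbations into an additional $O(\epsilon)$ slack, and after rescaling the constant $C$ this yields $f_\theta(x,y,u) \leq g(x,y,u) + \epsilon$ for all $x,y,u \in S^{k-1}$ simultaneously, with probability at least $1 - (C'/\epsilon)^{8k} e^{-cn\epsilon^2}$ over $\theta$. Multiplying through by $n$ gives the claimed PSD inequality $\frac{1}{n} G_{W,-\epsilon}(x,y) \preceq Q_{x,y} + \epsilon I_k$ once we restore the $u$-quantifier.

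For step (ii), the event just established is defined on the conditioned variable $\theta$; I would move back to $W$ by noting that Lemma~\ref{lemma:basic} guarantees $W \in \Theta$ with probability at least $1 - e^{-n/2} - ne^{-k/8}$. Conditioning can only inflate probabilities by a factor of at most $1/\Pr[W \in \Theta] \leq 2$ (which is absorbed into the constant), so the desired event holds on $W$ with probability at least $1 - (C/\epsilon)^{8k} e^{-cn\epsilon^2} - ne^{-k/8}$, as stated (the $e^{-n/2}$ term is dominated by the $ne^{-k/8}$ term under the hypothesis $n \geq ck$).

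I do not expect any real obstacle here: the genuine difficulty — handling the $(x,y)$-uniformity despite the poor Lipschitz constant in those variables — was already resolved via the pseudo-Lipschitz framework in the preceding lemma. The remaining work is purely a routine $\epsilon$-net argument over the single $k$-dimensional variable $u$, plus a final de-conditioning step; the only bookkeeping care required is to choose $\rho = \Theta(\epsilon)$ so that both the Lipschitz slack and the volume of the $\rho$-net fit within the budget allotted by the exponent $(C/\epsilon)^{8k}$ in the target probability.
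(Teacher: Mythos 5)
Your proposal matches the paper's proof essentially line for line: fix a $\Theta(\epsilon)$-net over $u$, union-bound the preceding uniform-in-$(x,y)$ lemma over the net, extend to all $u\in S^{k-1}$ via Lemma~\ref{lemma:perturbu} and the bound $\|Q_{x,y}\|\le 1$, then de-condition from $\theta$ to $W$ using Lemma~\ref{lemma:basic}. The only nitpick is your phrasing of the de-conditioning as a ``factor of $2$'' inflation — going from $\Pr[\cdot\mid W\in\Theta]$ to $\Pr[\cdot]$ the loss is the additive $\Pr[W\notin\Theta]\le ne^{-k/8}+e^{-n/2}$ (which you do correctly include in the final bound), not a multiplicative factor, so that remark is harmless but slightly off.
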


\begin{proof}
Let $\mathcal{E} \subset S^{k-1}$ be an $\epsilon$-net of size at most $(3/\epsilon)^k$. By a union bound, it holds with probability at least $1 - (3/\epsilon)^k (C/\epsilon)^{8k} e^{-cn\epsilon^2}$ over $\theta$ that for all $x,y \in S^{k-1}$ and all $u \in \mathcal{E}$,\[f_\theta(x,y,u) \leq g(x,y,u) + 2\epsilon.\]
For any $x,y,u \in S^{k-1}$ there is some $v \in \mathcal{E}$ with $\norm{u-v}_2 \leq \epsilon$, so by Lemma~\ref{lemma:perturbu},\[f_\theta(x,y,u) \leq f_\theta(x,y,v) + O(\epsilon) \leq g(x,y,v) + O(\epsilon).\]
But $\norm{Q_{x,y}} \leq 1$, so\[|g(x,y,u) - g(x,y,v)| = |u^T Q_{x,y} u - v^T Q_{x,y} v| \leq 2 \norm{u-v}_2.\]
Thus, $f_\theta(x,y,u) \leq g(x,y,u) + O(\epsilon)$. We conclude that with probability at least $1 - (3/\epsilon)^k (C/\epsilon)^{8k} e^{-cn\epsilon^2}$ over $\theta$ we have the desired inequality. But $\Pr[W \in \Theta] \geq 1 - ne^{-k/8} - e^{-n/2}$. So the desired inequality holds with probability at least $1 - (3/\epsilon)^k (C/\epsilon)^{8k} e^{-cn\epsilon^2} - ne^{-k/8} - e^{-n/2}$ over $W$.
\end{proof}

We turn to the lower bound on $G_{W,\epsilon}(x,y)$. The proof is essentially identical. For fixed $x,y$ there is an analogous bound to Lemma~\ref{lemma:singleevent}:

\begin{lemma}{\cite{HV19}}
There are constants $c_K, \gamma_K$ such that if $n > c_K \epsilon^{-2} k$ then for any fixed $x,y \in S^{k-1}$ we have that\[G_\epsilon(x,y) \succeq nQ_{x,y} - 2\epsilon n I_k\]with probability at least $1 - 2e^{-\gamma_K \epsilon^2 n}.$
\end{lemma}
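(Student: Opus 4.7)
The plan is to mirror the proof of Theorem~\ref{thm:gupper} that we just completed for $G_{W,-\epsilon}$, adapted to the lower approximation $G_{W,\epsilon}(x,y) = \sum_{i=1}^n h_\epsilon(w_i \cdot x) h_\epsilon(w_i \cdot y) w_i w_i^T$. Crucially, because the statement is \emph{pointwise} in $(x,y)$ rather than uniform, the full pseudo-Lipschitz machinery of Theorem~\ref{lemma:main} is not needed: a plain $\epsilon$-net over $u \in S^{k-1}$ suffices. The matrix inequality $G_{W,\epsilon}(x,y) \succeq nQ_{x,y} - 2\epsilon n I_k$ is equivalent to the scalar inequality $u^T G_{W,\epsilon}(x,y) u \geq n u^T Q_{x,y} u - 2\epsilon n$ holding simultaneously for every $u \in S^{k-1}$, so I would reduce to controlling $f^+_W(x,y,u) := \tfrac{1}{n} u^T G_{W,\epsilon}(x,y) u$.

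First I would establish the expectation bound. Because $h_\epsilon(z) \geq \mathbbm{1}_{z \geq 0} - \mathbbm{1}_{0 \leq z \leq \epsilon}$, a short one-dimensional Gaussian computation (mirroring the upper-approximation bound invoked in the proof of Lemma~\ref{lemma:singleevent} from \cite{HV19}) gives
\[
\EE[G_{W,\epsilon}(x,y)] \succeq Q_{x,y} - \left(\tfrac{\epsilon}{2\norm{x}} + \tfrac{\epsilon}{2\norm{y}}\right) I,
\]
which for $x,y \in S^{k-1}$ yields $\EE f^+_W(x,y,u) \geq u^T Q_{x,y} u - \epsilon$. Next, for fixed $u \in S^{k-1}$, write $f^+_W(x,y,u) = \tfrac{1}{n}\sum_{i=1}^n z_i$ with $z_i = h_\epsilon(w_i x) h_\epsilon(w_i y)(w_i u)^2$. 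Since $h_\epsilon \in [0,1]$ is bounded and $(w_i u)^2$ is chi-squared with one degree of freedom, each $z_i - \EE z_i$ is sub-exponential with an absolute constant variance proxy. Bernstein's inequality then gives $\Pr[f^+_W(x,y,u) < \EE f^+_W(x,y,u) - \epsilon] \leq 2\exp(-c n \epsilon^2)$ for an absolute constant $c>0$, hence the pointwise lower bound $f^+_W(x,y,u) \geq u^T Q_{x,y} u - 2\epsilon$ with the same probability.

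To promote this to a bound uniform over $u \in S^{k-1}$, I would fix an $\epsilon$-net $\mathcal{E} \subset S^{k-1}$ of size at most $(3/\epsilon)^k$, union-bound the Bernstein estimate over $\mathcal{E}$, and then condition on the event $\norm{W} \leq 3\sqrt{n}$ from Lemma~\ref{lemma:basic} to transfer the bound from $\mathcal{E}$ to all of $S^{k-1}$. On this event $f^+_W(x,y,\cdot)$ is $O(1)$-Lipschitz (exactly by the calculation of Lemma~\ref{lemma:perturbu}), and $u \mapsto u^T Q_{x,y} u$ is trivially $O(1)$-Lipschitz because $\norm{Q_{x,y}} \leq 1$; so the net error costs at most $O(\epsilon)$, absorbed into the slack. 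The overall failure probability is at most $(3/\epsilon)^k \cdot 2\exp(-cn\epsilon^2) + e^{-n/2}$, and the assumption $n > c_K \epsilon^{-2} k$ makes the exponential dominate the net cardinality for a suitable $c_K$, yielding the claimed $1 - 2e^{-\gamma_K \epsilon^2 n}$ bound.

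I do not expect any serious obstacles: the argument is entirely standard, and is structurally identical to the upper-bound proof of Theorem~\ref{thm:gupper} with the sign of the Bernstein deviation flipped and with $h_{-\epsilon}$ replaced by $h_\epsilon$. The only mildly delicate step is the matrix expectation inequality $\EE G_{W,\epsilon}(x,y) \succeq Q_{x,y} - \epsilon I$, and this reduces to checking that the mass that $h_\epsilon$ undercounts (the event $\{0 \leq w_i \cdot x \leq \epsilon\}$, and analogously for $y$) contributes at most $\epsilon$ to each diagonal under the Gaussian measure. That is why the result can simply be cited from \cite{HV19} rather than re-derived here.
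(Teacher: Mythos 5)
The paper does not prove this lemma at all --- it is stated with the citation to~\cite{HV19} and imported as a black box, serving as the pointwise-in-$(x,y)$ concentration input to the subsequent uniform argument of Theorem~\ref{thm:glower}. So there is no ``paper's own proof'' to compare against; your proposal is a reconstruction of the argument from~\cite{HV19}, and it is largely faithful to the standard template: the expectation bound $\EE[G_{W,\epsilon}(x,y)] \succeq Q_{x,y} - \epsilon I$ via the undercounting event $\{0 \le w_i\cdot x \le \epsilon\}$, pointwise Bernstein concentration of $\tfrac{1}{n}u^T G_{W,\epsilon}(x,y)u$ for fixed $u$, and a net over $u$.

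There is one quantitative mismatch worth flagging. You propose an $\epsilon$-net over $u \in S^{k-1}$ of size $(3/\epsilon)^k$ and absorb the Lipschitz error into the $O(\epsilon)$ slack. The union bound then costs a factor $(3/\epsilon)^k$, so the failure probability is $(3/\epsilon)^k e^{-cn\epsilon^2}$; to bring this below $e^{-\gamma_K n\epsilon^2}$ with absolute constants $c_K, \gamma_K$ you need $n \gtrsim k\,\epsilon^{-2}\log(1/\epsilon)$, not the $n > c_K\epsilon^{-2}k$ that the lemma asserts. To match the stated bound one should instead prove the two-sided operator-norm statement $\lVert \tfrac{1}{n}G_{W,\epsilon}(x,y) - Q_{x,y}\rVert \lesssim \epsilon$ using a \emph{constant-granularity} net (say a $1/4$-net of size $9^k$) together with the standard fact that controlling $|u^T S u|$ on such a net controls $\lVert S\rVert$ up to a factor $2$; this removes the $\log(1/\epsilon)$ dependence because the net size no longer scales with $\epsilon$. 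The pointwise Bernstein bound you derive is indeed two-sided, and $Q_{x,y}$ is bounded in operator norm, so this route is available. That said, the discrepancy is harmless for the paper: Theorem~\ref{thm:glower} (and Theorem~\ref{thm:main}) already require $n > c k\epsilon^{-2}\log(1/\epsilon^2)$ because the pseudo-Lipschitz argument over $(x,y)$ introduces its own $\log(1/\epsilon)$ factor, so your slightly weaker version of the lemma would suffice for everything downstream.
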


The same pseudo-Lipschitz bounds hold for $u^T G_{M,\epsilon}(x,y) u$ as we proved for $u^T G_{M,-\epsilon} u$, and the remaining argument is unchanged. Thus, we have the following theorem.

\begin{theorem}\label{thm:glower}
There is a constant $c$ such that if $n > ck\epsilon^{-2}\log(\epsilon^{-2})$ then with high probability over $W$, for all $x,y \in S^{k-1}$,\[G_{W,\epsilon}(x,y) \succeq nQ_{x,y} - \epsilon n.\]
\end{theorem}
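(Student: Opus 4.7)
The plan is to mirror the proof of Theorem~\ref{thm:gupper} step-for-step, substituting the lower-approximation function $h_{\epsilon}$ for the upper-approximation $h_{-\epsilon}$ throughout. Concretely, I would again condition on the event $W \in \Theta$ (where $\Theta$ is the set of matrices with $\|M\| \leq 3\sqrt{n}$ and $\max_i \|M_i\| \leq \sqrt{2k}$, which occurs with probability at least $1 - ne^{-k/8} - e^{-n/2}$ by Lemma~\ref{lemma:basic}), fix $u \in S^{k-1}$, and define $f_M(x,y,u) = \tfrac{1}{n} u^T G_{M,\epsilon}(x,y) u$ together with $g(x,y,u) = u^T Q_{x,y} u$. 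Because the desired inequality is a \emph{lower} bound on $G_{W,\epsilon}$, I would apply Theorem~\ref{lemma:main} to the family $\{-f_M\}_{M \in \Theta}$, so that a uniform upper bound on $g - f_\theta$ yields the claimed matrix inequality $G_{W,\epsilon}(x,y) \succeq nQ_{x,y} - \epsilon n I_k$.

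Next I would verify the three hypotheses of Theorem~\ref{lemma:main} with essentially no new work. Hypothesis (1), single-point concentration, is given by the cited lemma from \cite{HV19} with failure probability $p = 2e^{-\gamma_K \epsilon^2 n}$ (the conditioning on $\Theta$ only doubles this, as in the proof of Lemma~\ref{lemma:singleevent}). Hypothesis (2), pseudo-Lipschitzness, uses the \emph{same} pseudo-balls $B_{M,t,u}$ as before: the proof of Lemma~\ref{lemma:perturbxy} only uses that the clipping function is bounded in $[0,1]$ and is $1/\epsilon$-Lipschitz, both of which hold for $h_{\epsilon}$. Thus $\{f_M\}_{M \in \Theta}$ is still $(2\epsilon, \epsilon^2/82, 1/2)$-pseudo-Lipschitz by the same calculation, and Lemma~\ref{lemma:volume} supplies the required volume bound unchanged. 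Hypothesis (3) is Lemma~\ref{lemma:q}, which concerns only $Q_{x,y}$ and is insensitive to the choice of clipping function. Applying Theorem~\ref{lemma:main} then gives, for any fixed $u$, uniform concentration over $x,y \in S^{k-1}$ with probability at least $1 - (C/\epsilon)^{8k} e^{-c n \epsilon^2}$.

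Finally I would remove the fixing of $u$ with a standard spherical $\epsilon$-net of size $(3/\epsilon)^k$. The key input is an analogue of Lemma~\ref{lemma:perturbu} for $G_{W,\epsilon}$ in place of $G_{W,-\epsilon}$; inspecting that proof shows it uses only boundedness of $h$ and the operator-norm bound on $M$, so the same $18n\|u-v\|_2$ Lipschitz estimate holds verbatim. A union bound over the net, combined with Lipschitzness in $u$ of both $f_\theta$ and $g$, gives uniform concentration over all $x,y,u \in S^{k-1}$ with probability at least $1 - (3/\epsilon)^k (C/\epsilon)^{8k} e^{-c n \epsilon^2}$ on the event $W \in \Theta$. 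Removing the conditioning costs an additive $ne^{-k/8} + e^{-n/2}$, and under the hypothesis $n \geq c k \epsilon^{-2} \log(\epsilon^{-2})$ with $c$ large enough, the dominant $(C/\epsilon)^{9k} e^{-c n \epsilon^2}$ term is $e^{-\Omega(k)}$, yielding the high-probability conclusion.

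The substantive content of the argument is already contained in the upper-bound proof; the only thing that could in principle go wrong is if the symmetry between $h_\epsilon$ and $h_{-\epsilon}$ broke down in some step. The mild obstacle to check carefully is therefore the pseudo-Lipschitz reduction in Lemma~\ref{lemma:perturbxy}: one needs that $|h_\epsilon(M_i x) h_\epsilon(M_i y) - h_\epsilon(M_i \tilde x) h_\epsilon(M_i \tilde y)|$ can be split via the triangle inequality and bounded by $\tfrac{1}{\epsilon}(|M_i(x-\tilde x)| + |M_i(y - \tilde y)|)$, which is immediate since $h_\epsilon$ is also bounded by $1$ and $1/\epsilon$-Lipschitz. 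Everything else is routine bookkeeping.
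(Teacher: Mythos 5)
Your proposal is correct and takes essentially the same approach as the paper, which simply states that the proof is ``essentially identical'' to the $G_{W,-\epsilon}$ upper bound, citing the corresponding pointwise concentration lemma from \cite{HV19} and observing that the pseudo-Lipschitz bounds carry over. The one detail you make explicit that the paper leaves implicit---applying Theorem~\ref{lemma:main} to $\{-f_M\}$ (and $-g$) so that the one-sided concentration inequality yields a lower rather than upper bound---is the right way to carry out the mirroring.
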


Our main result immediately follows.

\begin{namedproof}{Theorem~\ref{thm:main}}
Recall that for all $x,y \in S^{k-1}$,\[G_{W,\epsilon}(x,y) \preceq W_{+,x}^T W_{+,y} \preceq G_{W,-\epsilon}(x,y)\]
It follows from Theorems~\ref{thm:gupper} and~\ref{thm:glower} that with high probability over $W$, for all $x,y \in S^{k-1}$,\[nQ_{x,y} - \epsilon n \preceq W_{+,x}^T W_{+,y} \preceq nQ_{x,y} + \epsilon n.\] Since $Q_{x,y}$ and $W_{+,x}^T W_{+,y}$ are both invariant under scaling by a positive constant, this inequality then holds for all nonzero $x,y \in \RR^k$. So $W$ satisfies the normalized WDC with parameter $\epsilon$.
\end{namedproof}

\section{Proof of Main Result: Theorem~\ref{thm:main-compressed}}\label{sec:compressed}

In this section, we prove Theorem~\ref{thm:main-compressed}. We apply Theorem~\ref{thm:main} together with a theorem from \cite{HHHV18}. To provide the precise statement we need to introduce several more definitions. First, we introduce the \emph{unnormalized} WDC (as defined in \cite{HV19}):

\begin{definition}
A matrix $W \in \RR^{n \times k}$ is said to satisfy the \emph{unnormalized Weight Distribution Condition (WDC)} with parameter $\epsilon$ if for all nonzero $x,y \in \RR^k$ it holds that \[ \norm{W_{+,x}^T W_{+,y} - Q_{x,y}} \leq \epsilon\]
where $Q_{x,y} = \EE W_{+,x}^T W_{+,y}$ (with expectation over i.i.d. $N(0,1)$ entries of $W$).
\end{definition} 

Comparing with Definition~\ref{defn:wdc}, it's clear that $W$ satisfies the unnormalized WDC if and only if $\sqrt{n}W$ satisfies the normalized WDC. In this paper we proved results about the normalized WDC for ease of notation, but for compressed sensing we are concerned that our weight matrices satisfy the unnormalized WDC.

\begin{definition}[\cite{HV19}]
A matrix $A \in \RR^{m \times n}$ satisfies the \emph{Range Restricted Isometry Condition} (RRIC) with respect to $G$ and with parameter $\epsilon$ if for all $x,y,z,w \in \RR^k$, it holds that \begin{align*}
&|(G(x)-G(y))^T A^T A (G(z) - G(w)) - (G(x)-G(y))^T (G(z) - G(w))|  \\
&\leq \epsilon \norm{G(x)-G(y)}_2\norm{G(z)-G(w)}_2.
\end{align*}
\end{definition}

\begin{definition}[\cite{HHHV18}]
The \emph{empirical risk function} is $f: \RR^k \to \RR$ defined by\[f(x) = \frac{1}{2} \norm{AG(x) - y}_2^2.\]
\end{definition}

Let \textsc{algo} refer to Algorithm 1 in \cite{HHHV18}. This algorithm is a modification of gradient descent on the empirical risk, which given an initial point $x_0$ computes iterates $x_1, x_2, \dots$. The result of \cite{HHHV18} is a bound on the convergence rate of \textsc{algo}, assuming that the (unnormalized) WDC and RRIC are satisfied. We restate it below. We treat network depth $d$ as a constant for simplicity of notation; we denote all upper-bound constants by $C$.

\begin{theorem}[\cite{HHHV18}]\label{thm:det}
Suppose that $W^{(1)},\dots,W^{(d)}$ satisfy the (unnormalized) WDC, and $A$ satisfies the RRIC with respect to $G$, each with parameter $\epsilon \leq C$. Suppose that the noise $e$ satisfies $\norm{e} \leq C \norm{x^*}$. Let $x_0$ be the initial point of \textsc{algo}. Then after $N \leq C f(x_0)/\norm{x^*} + $ iterations, the iterate $x_N$ satisfies\[\norm{x_N - x^*}_2 \leq C ( \norm{x^*} \sqrt{\epsilon} + \norm{e})\]
Additionally, for any $i \geq N$,\[\norm{x_{i+1} - x^*} \leq \gamma^{i+1-N}\norm{x_N - x^*} + C\norm{e}\]
where $\gamma \in (0,1)$ is a constant.
\end{theorem}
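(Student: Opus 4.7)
The plan is to follow the global landscape analysis framework sketched earlier in the paper, noting that the statement is attributed to \cite{HHHV18} and so the argument is primarily borrowed from that reference. At a high level, I would first show that under exact WDC and RRIC (i.e., $\epsilon = 0$) the population landscape of the empirical risk $f(x) = \tfrac{1}{2}\norm{AG(x) - y}_2^2$ is essentially deterministic and has a very structured set of critical points; then use the assumed WDC/RRIC concentration with parameter $\epsilon$ to treat the empirical landscape as a small perturbation of the population landscape; and finally track the behavior of \textsc{algo} through two phases: an initial ``descent'' phase ending in $N$ iterations, and a ``local contraction'' phase giving geometric convergence at rate $\gamma$.

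For the landscape analysis, I would compute the expected generalized gradient of $f$ under the population versions of the WDC and RRIC: using the identity $\EE W_{+,x}^T W_{+,y} = Q_{x,y}$ and the RRIC to replace $A^T A$ with the identity on range differences of $G$, the resulting ``population gradient'' $v_x$ can be written in closed form in terms of the angles between $\prod_i W_+^{(i)} x$ and $\prod_i W_+^{(i)} x^*$. A calculation done in \cite{HV19, HHHV18} shows that $v_x$ vanishes only at (i) $x = x^*$, (ii) a single spurious point $x \approx -\rho_d x^*$ for a depth-dependent $\rho_d > 0$, and (iii) the origin. Everywhere else, $v_x$ has a large component in a specific ``descent direction'' (either $x - x^*$ or a specifically constructed direction at the spurious point that involves the second-moment curvature).

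With the landscape in hand, \textsc{algo} can be interpreted as standard subgradient descent augmented with a negative-curvature step at the spurious critical region; the WDC (with parameter $\epsilon$) and RRIC (with parameter $\epsilon$) provide uniform closeness between the empirical generalized gradient and the population gradient $v_x$, which is the key perturbation lemma. A potential-function argument then yields the $N \leq C f(x_0)/\norm{x^*}$ bound on the number of iterations needed to enter a small ball of radius $O(\norm{x^*}\sqrt{\epsilon} + \norm{e})$ around $x^*$, since each iterate decreases $f$ by a quantity proportional to $\norm{x^*}$ until that ball is reached. Once inside the ball, local strong convexity of the population risk near $x^*$ (again coming from the explicit formula for $v_x$) plus the concentration bound yields geometric contraction $\gamma \in (0,1)$ plus a $O(\norm{e})$ noise floor, giving the second displayed inequality.

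The main obstacle, and the reason the full proof in \cite{HHHV18} is lengthy, is handling the spurious critical region carefully: the landscape is only approximately convex in a neighborhood of $x^*$, the gradient is discontinuous due to ReLU activations (which is why the generalized/one-sided gradient formalism and the $h_{\pm\epsilon}$ smoothings are needed), and one has to argue that \textsc{algo}'s negative-curvature step always succeeds in escaping the neighborhood of $-\rho_d x^*$ using only the empirical (noisy) gradient. In my write-up I would present this as a black-box citation of \cite[Theorem~\ldots]{HHHV18}, since reproducing the whole argument adds nothing beyond what is already in that paper — the novelty in this paper is that our Theorem~\ref{thm:main} now supplies the WDC hypothesis under merely constant expansion, which is exactly what will be plugged in when combining Theorem~\ref{thm:det} with Theorem~\ref{thm:main} to obtain Theorem~\ref{thm:main-compressed}.
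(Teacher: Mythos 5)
Your proposal is consistent with the paper's treatment: the paper does not prove Theorem~\ref{thm:det} at all but simply imports it verbatim from \cite{HHHV18}, and you correctly identify that the right move is to cite it as a black box while describing the two-phase (descent then local contraction) landscape argument that underlies it. Your high-level sketch of the referenced proof is accurate and adds no gap, so the approaches are essentially identical.
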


So let $\epsilon \in (0,C)$. We only need to show that the (unnormalized) WDC and RRIC are satisfied under the conditions of Theorem~\ref{thm:main-compressed}. That is, we have the following assumptions (recall that $k = n_0$ and $n = n_d$) for constants $K_2, K_3$:
\begin{enumerate}[label = (\alph*)]
\item For $i \in [d]$, the weight matrix $W^{(i)}$ has dimension $n_i \times n_{i-1}$ with \[n_i \geq K_2 n_{i-1} \epsilon^{-2} \log(1/\epsilon^2).\] Additionally, the entries are drawn i.i.d. from $N(0,1/n_i)$.
\item The measurement matrix $A$ has dimension $m \times n$ with $m \geq K_3 \epsilon^{-1} \log(1/\epsilon)dk \log \prod_{i=1}^d n_i$. Additionally, the entries are drawn i.i.d. from $N(0,1/m)$.
\end{enumerate}

By Theorem~\ref{thm:main}, with high probability, the weight matrices $\sqrt{n_1}W^{(1)},\dots,\sqrt{n_d}W^{(d)}$ satisfy the normalized WDC, so $W^{(1)},\dots,W^{(d)}$ satisfy the unnormalized WDC. To show that $A$ satisfies the RRIC with respect to $G$, we appeal to Lemma~17 in \cite{HV19}, which proves this exact fact (without any assumptions about the expansion of $G$).

Thus, Theorem~\ref{thm:det} is applicable. This completes the proof of Theorem~\ref{thm:main-compressed}.

\section*{Acknowledgments}

We thank Paul Hand and Alex Dimakis for useful discussions. This research was supported by NSF Awards IIS-1741137, CCF-1617730 and CCF-1901292, by a Simons Investigator Award, by the DOE PhILMs project (No. DE-AC05-76RL01830), by the DARPA award HR00111990021, by the MIT Undergraduate Research Opportunities Program, and by a Google PhD Fellowship.

\bibliographystyle{plain}
\bibliography{bib}

\appendix

\section{Lower bound: necessity of non-trivial expansivity}\label{app:lower}

Our main result implied that constant expansion is sufficient to efficiently recover a latent parameter, even in the presence of compressive measurements and noise. In this section we prove a lower bound: that a factor-of-$2$ expansion is necessary, even in the absence of compressive measurements and noise. That is, inverting a neural network with expansion of less than $2$ is impossible. 
To prove this lower bound it suffices to consider a one-layer neural network\[G(x) = \relu(Wx),\] where $x \in \RR^k$ and $W \in \RR^{m \times k}$. We show that if $m \leq 2k-1$ then for any weight matrix $W$, there is some signal which corresponds to multiple latent vectors. Moreover if the rows of $W$ have bounded $\ell_2$ norm, then even approximate recovery is impossible.

Note that this bound is tight, since if the rows of $W$ consist of the $2k$ signed basis vectors $\{\pm e_1,\dots,\pm e_k\}$ then $G$ is injective.

\begin{proposition}
Suppose that $m \leq 2k - 1$ and $\max_{i \in [m]} \norm{W_i}_2 \leq B$. Then there are $x,y \in \RR^k$ with $G(x) = G(y)$ but $\norm{x-y}_2 \geq 1/B$.
\end{proposition}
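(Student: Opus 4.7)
The plan is to exploit the dimension count $m \leq 2k-1$ to find a direction along which $G$ is locally constant, and to use $\|W_i\|_2 \leq B$ to ensure that this direction has norm at least $1/B$. Without loss of generality every row $W_i$ is nonzero, since any zero row contributes $0$ to $G$ and can be discarded while only decreasing $B$.

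First, pick a generic direction $d \in \RR^k$ so that $W_i d \neq 0$ for every $i \in [m]$; this holds outside the union of the finitely many hyperplanes $\{z : W_i z = 0\}$. Let $S = \{i \in [m] : W_i d > 0\}$ be the active sign pattern along the ray $\RR_{>0} \cdot d$, and write $W_S$ for the submatrix of $W$ consisting of the rows indexed by $S$. Since $|S| + |[m] \setminus S| = m \leq 2k-1$, one of the two has size at most $k-1$; after possibly replacing $d$ by $-d$ (which swaps $S$ with its complement), I may assume $|S| \leq k-1$. Then $W_S$ has fewer rows than columns, so $\ker W_S$ is nontrivial, and I can choose $v \in \ker W_S$ with $\|v\|_2 = 1/B$.

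The witnesses will be $x = td + \tfrac{1}{2} v$ and $y = td - \tfrac{1}{2} v$, where $t \geq 1/(2\alpha)$ for $\alpha := \min_{j \notin S}|W_j d| > 0$ (and any $t>0$ works if $S = [m]$). Two short checks finish the proof. First, both points sit in the same ReLU cell: for $i \in S$, $W_i(td \pm v/2) = tW_i d > 0$ since $W_i v = 0$; for $j \notin S$, the drift $|\tfrac{1}{2} W_j v| \leq \tfrac{1}{2}\|W_j\|_2 \|v\|_2 \leq \tfrac{1}{2}$ is dominated by the margin $t|W_j d| \geq t\alpha \geq \tfrac{1}{2}$, so $W_j(td \pm v/2) \leq 0$. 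Second, on this cell the $i$-th coordinate of $G$ equals $W_i \cdot$ for $i \in S$ and is identically $0$ for $i \notin S$; since $W_S v = 0$, these coordinates agree at $x$ and at $y$. Hence $G(x) = G(y)$, while $\|x-y\|_2 = \|v\|_2 = 1/B$.

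The main obstacle is the cell-containment check: the perturbation $\pm v/2$ could in principle push us across an inactive hyperplane $\{z : W_j z = 0\}$ and change the sign pattern, destroying the equality of $G$-values. The row-norm bound $\|W_j\|_2 \leq B$ combined with $\|v\|_2 = 1/B$ caps this drift at exactly $\tfrac{1}{2}$, while traveling far enough along $d$ creates a margin of $t\alpha$ that absorbs it. This is where the row-norm hypothesis enters, and it is precisely what makes the $1/B$ lower bound tight (matching the optimal example $W_i = \pm e_j$ noted in the discussion).
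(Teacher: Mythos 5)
Your proof is correct, and it takes a genuinely different route than the paper's, though both exploit $m \leq 2k-1$ to find a set of at most $k-1$ rows of $W$ with nontrivial common kernel. The paper's proof splits on $\vrank W$: in the full-rank case it selects $k$ rows $W_S$ forming a basis, anchors $x = W_S^{-1}(-\mathbf{1})$ so the $S$-entries sit exactly at $-1$, and perturbs by a unit vector $v$ in $\ker W_T$ (where $T$ is the remaining $\leq k-1$ rows) scaled by $1/\norm{W_S v}_\infty$; the row-norm hypothesis enters through $\norm{W_S v}_\infty \leq B$, and the $S$-entries stay nonpositive by construction. Your proof instead lets the ReLU activation pattern along a generic ray $\RR_{>0}\,d$ determine the split, puts the kernel vector in $\ker W_S$ where $S$ is the (at most $k-1$, after a sign flip) positively activated rows, and uses the scaling $t$ to push the inactive rows' margin below the drift $|W_j v|/2 \leq 1/2$ coming from $\norm{W_j}_2 \norm{v}_2 \leq 1$. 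What this buys: no rank case distinction is needed, and both witness points sit symmetrically in the interior of a single linearity cell of $G$, making explicit the geometric mechanism — namely that with $m \leq 2k-1$, some cell of the ReLU partition contains a direction $v$ of length $1/B$ along which the restriction of $G$ is constant. The paper's version is slightly more compact; yours is slightly more transparent about where the cell structure is used.
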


\begin{proof}
If $\vrank(W) < k$, then there are distinct $x,y \in \RR^k$ with $Wx = Wy$, so certainly $G(x) = G(y)$. Otherwise, the rows of $W$ contain a basis for $\RR^k$. Without loss of generality, assume that $W_1,\dots,W_k$ span $\RR^k$. Define $S = \{1,\dots,k\}$ and $T = \{k+1,\dots,m\}$. Then the square submatrix $W_S$ is invertible. Define\[x = (W_S)^{-1} \begin{bmatrix} -1 \\ -1 \\ \vdots \\ -1 \end{bmatrix}.\]
Since $W_T$ has at most $k-1$ rows, it has non-trivial null space. Let $v \in \RR^k$ be a unit vector with $W_T v = 0$. Let $\lambda = \norm{W_S v}_\infty$. Since $W_S$ is invertible it's clear that $\lambda > 0$. On the other hand $\lambda \leq \sup_{i \in [m]} |W_iv| \leq B$. Define\[y = x + v/\lambda.\]
Certainly $\norm{x-y}_2 = \lambda^{-1} \geq 1/B$. For any $i \in T$, it's clear that $W_ix = W_iy$. Moreover, fix any $i \in S$. Then $W_ix = -1$, and\[W_iy = W_i x + W_i v/\lambda \leq -1 + \norm{W_S v}_\infty/\lambda \leq 0.\] Thus, $\relu(W_i x) = \relu(W_i y)$ for all $i \in [m]$. So $G(x) = G(y)$.
\end{proof}

\section{Lipschitz bound for $Q_{x,y}$}\label{app:q}

In this appendix we show that $Q_{x,y} = \frac{1}{n} \EE W_{+,x}^T W_{+,y}$ is Lipschitz (under the operator norm) as a function of $x$ and $y$. 

For $x \in \RR^k$ nonzero, let $\hat{x}$ refer to the unit vector $x/\norm{x}_2$. For $x,y \in \RR^k$ nonzero, define $M_{x,y} \in \RR^{k \times k}$ to be the matrix such that \[M_{x,y}(a\hat{x}+b\hat{y}+z) = a\hat{y}+b\hat{x}\] for any $a,b \in \RR$ and $z \in \RR^k$ with $z \perp x$ and $z \perp y$. Also let $\angle(x,y)$ denote the angle between vectors $x$ and $y$ (always between $0$ and $\pi$).

It can be calculated that \[ Q_{x,y} = \frac{\pi - \angle(x,y)}{2\pi}I_k + \frac{\sin \angle(x,y)}{2\pi} M_{x,y}.\]

This is the expression we'll manipulate. It can be easily shown that $\angle(x,y)$ is Lipschitz, so the first term is Lipschitz. The remaining difficulty is then in showing that $(\sin \theta)M_{x,y}$ is Lipschitz. We start by proving this under some weak assumptions, i.e. essentially that $x,y,\ty$ are ``in general position".

\begin{lemma}
Let $x, y, \ty \in S^{k-1}$. Let $\theta = \angle(x,y)$ and $\phi = \angle(x,\ty)$. Assume that $\theta,\phi \not \in \{0,\pi\}$. Then \[\norm{(\sin \theta)M_{x,y} - (\sin \phi)M_{x,\ty}} \leq \sqrt{79}\norm{y-\ty}_2.\]
\end{lemma}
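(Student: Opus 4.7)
The plan is to reduce the operator-norm bound to a computation in a $3$-dimensional subspace and then decompose the difference into three pieces, each controllable by a triangle inequality. Both $(\sin\theta)M_{x,y}$ and $(\sin\phi)M_{x,\ty}$ annihilate $V^\perp$ and have images in $V := \mathrm{span}(x, y, \ty)$, a subspace of dimension at most $3$; hence the operator norm of the difference on $\RR^k$ coincides with its operator norm on $V$. I would choose an orthonormal basis $e_1 = x$, $e_2 = P_x y / \|P_x y\|$, and $e_3 \in V$ orthogonal to both when $V$ is three-dimensional, so that $y = \cos\theta\, e_1 + \sin\theta\, e_2$ and $\ty = \cos\phi\, e_1 + \sin\phi\cos\gamma\, e_2 + \sin\phi\sin\gamma\, e_3$ for some $\gamma \in [0, \pi]$. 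A direct expansion yields
\begin{equation*}
\|y - \ty\|^2 = (\cos\theta - \cos\phi)^2 + (\sin\theta - \sin\phi)^2 + 2\sin\theta\sin\phi(1 - \cos\gamma).
\end{equation*}

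Next I would derive the identity $(\sin\theta)M_{x,y} = \sin\theta\cos\theta\,(xx^T - \hat s_y\hat s_y^T) + \sin^2\theta\,(x\hat s_y^T + \hat s_y x^T)$, with $\hat s_y := P_xy/\|P_xy\|$, by verifying its action on the orthonormal basis $\{x, \hat s_y\}$ of $\mathrm{span}(x,y)$ and noting that both sides annihilate $\mathrm{span}(x,y)^\perp$. Subtracting the analogous identity for $(\sin\phi)M_{x,\ty}$ decomposes the difference as $D = D_1 + D_2 + D_3$, where $D_1 = (\sin\theta\cos\theta - \sin\phi\cos\phi)\,xx^T$, $D_2 = \sin^2\theta\,(x\hat s_y^T + \hat s_y x^T) - \sin^2\phi\,(x\hat s_\ty^T + \hat s_\ty x^T)$, and $D_3 = \sin\phi\cos\phi\,\hat s_\ty\hat s_\ty^T - \sin\theta\cos\theta\,\hat s_y\hat s_y^T$.

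The first two pieces are straightforward: $\|D_1\| \leq 2\|y - \ty\|$ using $|\cos\theta - \cos\phi| \leq \|y-\ty\|$ and $|\sin\theta - \sin\phi| = |\|P_xy\|-\|P_x\ty\|| \leq \|y-\ty\|$; and using the key identity $\sin\theta\,\hat s_y = P_x y$, the matrix $D_2$ simplifies to $xv^T + vx^T$ with $v = \sin\theta P_x y - \sin\phi P_x\ty$ perpendicular to $x$, whence $\|D_2\| = \|v\| \leq 2\|y-\ty\|$ by a direct triangle inequality. For $D_3$, adding and subtracting either $\sin\theta\cos\theta\,\hat s_\ty\hat s_\ty^T$ or $\sin\phi\cos\phi\,\hat s_y\hat s_y^T$ and applying the triangle inequality gives
\begin{equation*}
\|D_3\| \leq |\sin\theta\cos\theta - \sin\phi\cos\phi| + \min\bigl(|\sin\theta\cos\theta|,\, |\sin\phi\cos\phi|\bigr)\cdot\|\hat s_y\hat s_y^T - \hat s_\ty\hat s_\ty^T\|,
\end{equation*}
and since $\hat s_y, \hat s_\ty$ are unit vectors making angle $\gamma$, we have $\|\hat s_y\hat s_y^T - \hat s_\ty\hat s_\ty^T\| = |\sin\gamma|$. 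Combining the coupling inequality $\sqrt{\sin\theta\sin\phi}\,|\sin\gamma| \leq \|y-\ty\|$ (obtained by dropping the two nonnegative squared differences in the formula above and using $|\sin\gamma| \leq 2|\sin(\gamma/2)|$) with the AM-GM observation $\min(\sin\theta, \sin\phi) \leq \sqrt{\sin\theta\sin\phi}$ yields $\|D_3\| \leq 3\|y-\ty\|$. Adding the three bounds gives $\|D\| \leq 7\|y-\ty\| \leq \sqrt{79}\,\|y-\ty\|$.

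The delicate step is controlling $D_3$: the unit vectors $\hat s_y, \hat s_\ty$ can differ by a unit-norm amount even when $y$ and $\ty$ are very close (when either $\theta$ or $\phi$ is near $0$ or $\pi$), so one cannot relate $\|\hat s_y\hat s_y^T - \hat s_\ty\hat s_\ty^T\| = |\sin\gamma|$ to $\|y-\ty\|$ in a Lipschitz fashion without additional structure. The crucial observation is that the coefficients $\sin\theta\cos\theta$ and $\sin\phi\cos\phi$ provide damping precisely in this singular regime, and the symmetric coupling $\sqrt{\sin\theta\sin\phi}\,|\sin\gamma| \leq \|y-\ty\|$ captures exactly the right balance. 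The hypothesis $\theta, \phi \not\in \{0, \pi\}$ only ensures that $\hat s_y, \hat s_\ty$ are well-defined, but the resulting bound is uniform over the allowed domain.
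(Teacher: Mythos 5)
Your proof is correct, and it reaches the conclusion by a genuinely different route than the paper. The paper works in the same three-dimensional coordinate frame you set up, but then writes out the explicit coordinate vectors $M_{x,y}u$ and $M_{x,\ty}u$ for a unit test vector $u$ and bounds each of the three coordinates of the difference separately, by expanding and bashing through the trigonometric expressions; the pieces $33d^2 + 42d^2 + 3d^2 = 78d^2$ give the constant $\sqrt{79}$ quoted in the statement. You instead isolate the algebraic identity $(\sin\theta)M_{x,y} = \sin\theta\cos\theta(xx^T - \hat s_y\hat s_y^T) + \sin^2\theta(x\hat s_y^T + \hat s_y x^T)$ and split the difference into the three structured pieces $D_1$, $D_2$, $D_3$, each of whose operator norm is computed cleanly (a scalar multiple of a rank-one projection, a symmetric rank-two matrix $xv^T + vx^T$ with exact norm $\|v\|$, and a difference of weighted projections). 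The payoff is a more conceptual argument and a sharper constant: your bound of $7$ beats $\sqrt{79} \approx 8.9$ (and also relies on the tighter estimates $|\cos\theta - \cos\phi| \le \|y-\ty\|$ and $|\sin\theta - \sin\phi| \le \|y-\ty\|$, whereas the paper loses a factor of $2$ by going through $|\theta - \phi| \le 2d$). The one step worth flagging as the crux of your argument, which you correctly identify, is the damping of the singular behavior of $\hat s_y, \hat s_\ty$ near $\theta,\phi \in \{0,\pi\}$: the coupling inequality $\sqrt{\sin\theta\sin\phi}\,|\sin\gamma| \le \|y-\ty\|$, derived by dropping the two nonnegative squares in your expansion of $\|y-\ty\|^2$ and using $\sin^2\gamma \le 2(1-\cos\gamma)$, is exactly what makes the $D_3$ bound uniform; I have checked this inequality and it is correct.
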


\begin{proof}
Fix an orthonormal basis for $\vspan\{x,y,\ty\}$ in which \[x=(1,0,0)\] \[ y = (\cos\theta, \sin\theta,0)\] \[\ty = (\cos \phi, \sin\phi \cos \alpha, \sin \phi \sin \alpha.\]
Let $d = \norm{y-\ty}_2$ and observe that $\sin^2\alpha \leq d^2/\sin^2\phi$. Moreover if $\beta = \angle(y,\ty)$ then $\beta \leq 2\sqrt{2 - 2\cos\beta} = 2d$, so $|\theta-\phi| \leq \beta \leq 2d$. Since $\sin$ and $\cos$ are $1$-Lipschitz it follows that $|\sin\theta-\sin\phi| \leq 2d$ and $|\cos\theta-\cos\phi| \leq 2d$. 

Let $u \in S^{k-1}$. It can be checked that in this basis, \[M_{x,y}u = (u_1\cos\theta + u_2\sin\theta, u_1\sin\theta - u_2\cos\theta,0)\] and
\begin{align*}
M_{x,\ty}u 
&= (u_1\cos\phi + u_2\sin\phi\cos\alpha + u_3 \sin\phi \sin \alpha, \\
& u_1\sin\phi\cos\alpha - u_2\cos\phi\cos^2\alpha - u_3\cos\phi\sin\alpha\cos\alpha, \\
& u_1\sin\phi\sin\alpha - u_2\cos\phi \sin\alpha\cos\alpha - u_3 \cos\phi\sin^2\alpha).
\end{align*}
Indeed, we see that $M_{x,y}u \in \vspan\{x,y\}$ and $(M_{x,y}u)\cdot x = u \cdot y$ and $(M_{x,y}u)\cdot y = u \cdot x$ (and similarly for $M_{x,\ty}u$). But now
\begin{align*}
&((\sin\theta)M_{x,y}u - (\sin\phi)M_{x,\ty}u)_1^2 \\
&= (u_1(\sin\theta\cos\theta - \sin\phi\cos\phi) + u_2(\sin^2\theta - \sin^2\phi\cos\alpha) - u_3\sin^2\phi\sin\alpha) \\
&\leq (\sin\theta\cos\theta - \sin\phi\cos\phi)^2 + (\sin^2\theta-\sin^2\phi\cos\alpha)^2 - \sin^4\phi\sin^2\alpha \\
&\leq (|\sin\theta -\cos\theta|+|\sin\phi-\cos\phi|)^2 + (|\sin^2\theta-\sin^2\phi| + |\sin^2\phi(1-\cos^2\alpha)|)^2 \\
& \qquad + \sin^4\phi\sin^2\alpha \\
&\leq 33d^2.
\end{align*}
Similarly,
\begin{align*}
&((\sin\theta)M_{x,y}u - (\sin\phi)M_{x,\ty}u)_2^2 \\
&\leq (\sin^2\theta - \sin^2\phi\cos\alpha)^2 + (\sin\theta\cos\theta - \sin\phi\cos\phi\cos^2\alpha)^2 + \sin^2\phi\cos^2\phi\sin^2\alpha\cos^2\alpha \\
&\leq 16d^2 + (|\sin\theta\cos\theta - \sin\phi\cos\phi| + |\sin\phi\cos\phi\sin\alpha|)^2 + d^2 \\
&\leq 42d^2
\end{align*}
and
\begin{align*}
&((\sin\theta)M_{x,y}u - (\sin\phi)M_{x,\ty}u)_3^2 \\
&\leq \sin^4\phi\sin^2\alpha + \sin^2\phi\cos^2\phi\sin^4\alpha + \sin^2\phi\cos^2\phi\sin^2\alpha\cos^2\alpha \\
&\leq 3d^2.
\end{align*}
Therefore \[\norm{(\sin\theta)M_{x,y}u - (\sin\phi)M_{x,\ty}u}_2 \leq d\sqrt{79}.\]
Since this holds for all unit vectors $u$, the lemma follows.
\end{proof}

It immediately follows that beyond relating $M_{x,y}$ to $M_{x,\ty}$ we can relate $M_{x,y}$ to $M_{\tx,\ty}$.

\begin{corollary}
Let $x, y, \tx,\ty \in S^{k-1}$. Let $\theta = \angle(x,y)$ and $\tilde{\theta} = \angle(\tx,\ty)$. Assume that $\theta,\tilde{\theta} \not \in \{0,\pi\}$. Then \[\norm{(\sin \theta)M_{x,y} - (\sin \tilde{\theta})M_{x,\ty}} \leq \sqrt{79}(\norm{x-\tx}_2 + \norm{y-\ty}_2).\]
\end{corollary}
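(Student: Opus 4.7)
The plan is to reduce the two-argument variation to two applications of the preceding lemma, which only handles variation in the second coordinate. The key observation is that the matrix $M_{x,y}$ is symmetric in its arguments: directly from the definition $M_{x,y}(a\hat{x}+b\hat{y}+z)=a\hat{y}+b\hat{x}$, the map swaps the $\hat{x}$- and $\hat{y}$-components and leaves the orthogonal complement fixed, which is unchanged if we relabel $x \leftrightarrow y$. Hence $M_{x,y}=M_{y,x}$, and by symmetry the preceding lemma also controls variation in the first argument.

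Given this, I would interpolate through the intermediate pair $(x,\ty)$ with $\phi=\angle(x,\ty)$, applying the triangle inequality:
\[
\norm{(\sin\theta)M_{x,y} - (\sin\tilde{\theta})M_{\tx,\ty}} \leq \norm{(\sin\theta)M_{x,y} - (\sin\phi)M_{x,\ty}} + \norm{(\sin\phi)M_{x,\ty} - (\sin\tilde{\theta})M_{\tx,\ty}}.
\]
The first term is bounded by $\sqrt{79}\norm{y-\ty}_2$ by a direct invocation of the lemma on the triple $(x,y,\ty)$. For the second term, I would rewrite $M_{x,\ty}=M_{\ty,x}$ and $M_{\tx,\ty}=M_{\ty,\tx}$ using symmetry, so that the second term becomes $\norm{(\sin\phi)M_{\ty,x}-(\sin\tilde{\theta})M_{\ty,\tx}}$. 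Now the lemma applies with first argument fixed to $\ty$ and the varying second argument moving from $x$ to $\tx$, yielding a bound of $\sqrt{79}\norm{x-\tx}_2$. Adding the two bounds gives exactly the corollary.

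The main subtlety, and essentially the only point requiring any thought, is that the lemma requires both its intermediate angles to lie in $(0,\pi)$. The outer angles $\theta$ and $\tilde{\theta}$ are good by hypothesis, but the intermediate angle $\phi=\angle(x,\ty)$ could in principle be $0$ or $\pi$, in which case $M_{x,\ty}$ is not defined. I would handle this by a perturbation argument: the function $(x,y)\mapsto(\sin\angle(x,y))M_{x,y}$ extends continuously to the full product $S^{k-1}\times S^{k-1}$ by setting it to $0$ when $x,y$ are parallel (the matrix factor is bounded while $\sin\angle$ vanishes). Both sides of the target inequality depend continuously on $\ty$, so I can perturb $\ty$ slightly within a neighborhood in which $\tilde{\theta}\notin\{0,\pi\}$ is preserved and $\phi\notin\{0,\pi\}$ holds, apply the bound proved above, and take the limit as the perturbation vanishes. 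This last step is routine and does not affect the Lipschitz constant.
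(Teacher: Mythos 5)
Your proof is correct and takes essentially the same route as the paper's: pass through the intermediate pair $(x,\ty)$ and apply the preceding lemma twice, using the symmetry $M_{x,y}=M_{y,x}$ to transfer the lemma from the second argument to the first. Your continuity/perturbation argument for the degenerate case $\angle(x,\ty)\in\{0,\pi\}$ is actually somewhat more careful than the paper's, which disposes of degeneracy with a brief WLOG ("either $\angle(x,\ty)\neq 0$ or $\angle(y,\tx)\neq 0$, else $M_{x,y}=M_{\tx,\ty}$") that leaves the antipodal case $\angle=\pi$ implicit.
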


\begin{proof}
Either $\angle(x,\ty) \neq 0$ or $\angle(y,\tx) \neq 0$, since otherwise $M_{x,y} = M_{\tx,\ty}$. Without loss of generality suppose $\angle(x,\ty) \neq 0$. Then by the previous lemma, we can relate $M_{x,y}$ to $M_{x,\ty}$. And we can relate $M_{x,\ty}$ to $M_{\tx,\ty}$. We get the claimed bound.
\end{proof}


From this result the full claim is now easily derived.

\begin{lemma}
Let $x,y,\tilde{x},\tilde{y} \in S^{k-1}$. Let $d = \max(\norm{x-\tilde{x}}_2, \norm{y - \tilde{y}}_2)$. Let $\theta = \angle(x,y)$ and $\tilde{\theta} = \angle(\tilde{x},\tilde{y})$. Then\[\norm{(\sin \theta)M_{x,y} - (\sin \tilde{\theta})M_{\tx,\ty}} \leq 2\sqrt{79}d.\]
\end{lemma}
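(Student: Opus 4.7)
My plan is to reduce to the preceding corollary by handling the degenerate angles separately. When both $\theta, \tilde\theta \in (0,\pi)$, the corollary applies directly and yields
\[
\norm{(\sin\theta)M_{x,y} - (\sin\tilde\theta)M_{\tx,\ty}} \leq \sqrt{79}(\norm{x-\tx}_2 + \norm{y-\ty}_2) \leq 2\sqrt{79}\,d,
\]
since $d = \max(\norm{x-\tx}_2,\norm{y-\ty}_2)$ upper bounds each summand. So the only real work is to handle the boundary cases $\theta,\tilde\theta \in \{0,\pi\}$.

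The key observation I will use to dispatch the boundary cases is that $\norm{M_{x,y}} \leq 1$ whenever $M_{x,y}$ is well-defined. Indeed, a unit vector written as $a\hat x + b\hat y + z$ with $z \perp \vspan\{x,y\}$ satisfies $a^2 + 2ab\cos\theta + b^2 + \norm{z}_2^2 = 1$, while its image $a\hat y + b\hat x$ has squared norm $a^2 + 2ab\cos\theta + b^2 \leq 1$. Consequently, when $\sin\theta = 0$ the term $(\sin\theta)M_{x,y}$ vanishes regardless of how one resolves the ambiguity in defining $M_{x,y}$, and likewise for $\tilde\theta$.

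Using this, the boundary case collapses as follows. Without loss of generality suppose $\theta \in \{0,\pi\}$; I then need only bound $|\sin\tilde\theta|\cdot\norm{M_{\tx,\ty}} \leq |\sin\tilde\theta| \leq 2d$. In the case $\theta = 0$ we have $x = y$, so the triangle inequality gives $\norm{\tx - \ty}_2 \leq \norm{\tx - x}_2 + \norm{y - \ty}_2 \leq 2d$. The chord--angle identity $\norm{\tx - \ty}_2 = 2\sin(\tilde\theta/2)$ for unit vectors, together with $|\sin\tilde\theta| = 2|\sin(\tilde\theta/2)\cos(\tilde\theta/2)| \leq 2\sin(\tilde\theta/2) = \norm{\tx - \ty}_2 \leq 2d$, yields the claim. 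The case $\theta = \pi$ is analogous, using $x = -y$, $\norm{\tx + \ty}_2 \leq 2d$, and $\sin\tilde\theta = \sin(\pi - \tilde\theta)$ with $\pi - \tilde\theta = \angle(\tx,-\ty)$. The remaining cases ($\tilde\theta \in \{0,\pi\}$ with $\theta \in (0,\pi)$, or both $\theta,\tilde\theta \in \{0,\pi\}$) follow by symmetry or are trivial.

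The main obstacle is really just bookkeeping of the degenerate configurations where $M_{x,y}$ is not well-defined; once the uniform bound $\norm{M_{x,y}} \leq 1$ is in hand, the boundary contributions dissolve into $|\sin\theta|$ or $|\sin\tilde\theta|$, which the chord--angle identity absorbs easily into $2\sqrt{79}\,d$, and the corollary handles the generic case with no extra effort.
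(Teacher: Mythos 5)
Your proof is correct and takes essentially the same route as the paper: split into the generic case (both angles nondegenerate), handled by the preceding corollary, and the degenerate case, handled via $\norm{M_{x,y}} \leq 1$ together with a bound on the surviving sine factor. The only variation is cosmetic---you bound $|\sin\tilde{\theta}|$ directly by the chord--angle identity, obtaining $2d$, whereas the paper invokes $|\sin\theta - \sin\tilde{\theta}| \leq 4d$---and you also supply the short verification of $\norm{M_{x,y}} \leq 1$, which the paper asserts without proof.
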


\begin{proof}
If $\sin \theta = \sin \tilde{\theta} = 0$ then the claim is clear. If both are nonzero it follows from the previous lemma. So now suppose without loss of generality that $\sin \tilde{\theta} = 0$ but $\sin \theta > 0$. We've shown in the previous lemma that $|\sin \theta - \sin \tilde{\theta}| \leq 4d.$ Additionally, $\norm{M_{x,y}} \leq 1$. Therefore\[\norm{(\sin \theta)M_{x,y} - (\sin \tilde{\theta})M_{\tx,\ty}} = (\sin \theta)\norm{M_{x,y}} \leq 4d.\]
This bound is sufficient.
\end{proof}

\begin{lemma}\label{lemma:qapp}
Let $x,y,\tilde{x},\tilde{y} \in (3/2)\CB \setminus (1/2)\CB$. Let $d = \max(\norm{x-y},\norm{\tx-\ty})$. Then\[\norm{Q_{x,y} - Q_{\tilde{x},\tilde{y}}} \leq O(\norm{x-\tilde{x}}_2 + \norm{y - \tilde{y}}_2).\]
\end{lemma}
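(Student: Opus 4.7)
The plan is to reduce to the unit-sphere case handled by the lemmas that immediately precede this one in the appendix. The key observation is that the closed form
\[Q_{x,y} = \frac{\pi - \angle(x,y)}{2\pi}I_k + \frac{\sin\angle(x,y)}{2\pi}M_{x,y}\]
involves only quantities that depend on the directions $\hat x = x/\norm{x}_2$ and $\hat y = y/\norm{y}_2$: both the angle $\angle(\cdot,\cdot)$ and the matrix $M_{\cdot,\cdot}$ are invariant under positive scaling of either argument. Hence $Q_{x,y} = Q_{\hat x,\hat y}$ and $Q_{\tx,\ty} = Q_{\hat{\tx},\hat{\ty}}$, so it suffices to prove a Lipschitz bound on the unit sphere and then combine it with Lipschitzness of the direction map $x \mapsto \hat x$ on the annulus.

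For the unit-vector bound I would split the difference $Q_{\hat x,\hat y} - Q_{\hat{\tx},\hat{\ty}}$ into the $I_k$-term and the $M$-term. For the $I_k$-term, I use that on $S^{k-1}$, $\norm{u-v}_2 = 2\sin(\angle(u,v)/2) \geq \tfrac{2}{\pi}\angle(u,v)$, so $\angle(\cdot,\cdot)$ is $\tfrac{\pi}{2}$-Lipschitz in each argument, giving $|\angle(\hat x,\hat y) - \angle(\hat{\tx},\hat{\ty})| \leq \tfrac{\pi}{2}(\norm{\hat x-\hat{\tx}}_2 + \norm{\hat y - \hat{\ty}}_2)$ by one triangle inequality. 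For the $M$-term, the preceding lemma gives
\[\norm{(\sin\angle(\hat x,\hat y))M_{\hat x,\hat y} - (\sin\angle(\hat{\tx},\hat{\ty}))M_{\hat{\tx},\hat{\ty}}} \leq 2\sqrt{79}\,\max(\norm{\hat x-\hat{\tx}}_2, \norm{\hat y-\hat{\ty}}_2)\]
directly. Adding the two bounds yields $\norm{Q_{\hat x,\hat y} - Q_{\hat{\tx},\hat{\ty}}} \leq C(\norm{\hat x-\hat{\tx}}_2 + \norm{\hat y-\hat{\ty}}_2)$ for an absolute constant $C$.

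To pass from directions back to the annulus, I use the standard identity
\[\hat x - \hat{\tx} = \frac{(\norm{\tx} - \norm{x})x + \norm{x}(x - \tx)}{\norm{x}\,\norm{\tx}},\]
together with $|\norm{\tx} - \norm{x}| \leq \norm{x-\tx}_2$, to obtain $\norm{\hat x - \hat{\tx}}_2 \leq 2\norm{x-\tx}_2/\norm{\tx} \leq 4\norm{x-\tx}_2$ since $\norm{\tx} \geq 1/2$ by hypothesis; the same bound holds for $y,\ty$. Chaining this into the unit-sphere Lipschitz bound completes the proof.

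There is no real obstacle remaining, since the hard geometric work was done in the previous lemma that bounded $\norm{(\sin\theta)M_{x,y} - (\sin\tilde\theta)M_{\tx,\ty}}$; the two remaining ingredients (angle Lipschitzness on the sphere and Lipschitzness of normalization on the annulus $\{\norm{x}_2 \geq 1/2\}$) are both standard. The annulus hypothesis $x,y,\tx,\ty \in (3/2)\CB \setminus (1/2)\CB$ enters only to give a uniform lower bound on the denominators $\norm{x},\norm{y},\norm{\tx},\norm{\ty}$, which prevents the normalization step from inflating the Lipschitz constant.
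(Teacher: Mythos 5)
Your proposal is correct and follows essentially the same route as the paper's proof: reduce to unit vectors via scale-invariance of $Q_{x,y}$, apply the preceding $\sin\theta\,M_{x,y}$ Lipschitz lemma for the off-diagonal term, use angle Lipschitzness on the sphere for the $I_k$ term, and absorb the normalization cost using the annulus lower bound $\norm{\cdot}\ge 1/2$. The only differences are in the numerical constants: you bound the normalization map by a factor of $4$ (the paper asserts $2$, which requires a slightly more careful geometric argument than the identity you use, though your $4$ is perfectly adequate for an $O(\cdot)$ conclusion), and you bound the angle Lipschitz constant by $\pi/2$ via Jordan's inequality where the paper uses $2$; neither discrepancy affects the asymptotics. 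One small note: the $d$ in the lemma statement, $d=\max(\norm{x-y},\norm{\tx-\ty})$, is a typo in the paper and should read $d=\max(\norm{x-\tx}_2,\norm{y-\ty}_2)$, matching the hypothesis of the preceding lemma; your proposal correctly works with the latter quantity.
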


\begin{proof}
Let $\hat{x} = x/\norm{x}$ and $\hat{y} = y/\norm{y}$. Similarly define $\hat{\tx}$ and $\hat{\ty}$. Note that normalizing at most doubles the distance. Therefore
\begin{align*}
\norm{(\sin \theta)M_{x,y} - (\sin \tilde{\theta}) M_{\tx,\ty}} 
&= \norm{(\sin \theta)M_{\hat{x},\hat{y}} - (\sin \tilde{\theta}) M_{\hat{\tx},\hat{\ty}}} \\
&\leq 2\sqrt{79}(2d)
\end{align*}
Additionally, let $\theta$ and $\tilde{\theta}$ be as defined previously. We have $|\theta - \tilde{\theta}| \leq 4(2d) = 8d$. Hence,
\begin{align*}
2\pi\norm{Q_{x,y} - Q_{\tx,\ty}}
&=\norm{(\tilde{\theta} - \theta)I_k + (\sin \theta)M_{x,y} - (\sin \tilde{\theta})M_{\tx,\ty}} \\
&\leq 8d + 4\sqrt{79}d.
\end{align*}
Simplifying,\[\norm{Q_{x,y} - Q_{\tx,\ty}} \leq 7d\] as desired.
\end{proof}

\section{Global landscape analysis}\label{sec:gla}

In this section, we explain why the Weight Distribution Condition arises, by sketching the basic theory of compressed sensing with generative priors. While our work applies to a number of different models in compressed sensing with generative priors (see Section~\ref{sec:results} for details), we limit the exposition in this section to the \emph{global landscape analysis} of the vanilla model, due to \cite{HV19}.

Let $G: \RR^k \to \RR^n$ be a fully connected ReLU neural network of the form \[ G(x) = \relu(W^{(d)}(\dots \relu(W^{(2)}(\relu(W^{(1)}x))) \dots )). \]
Let $x^* \in \RR^k$ be an unknown latent vector. We wish to recover $x^*$ (or $G(x^*)$) from $m \ll n$ noisy linear measurements of $G(x^*)$. Specifically, for some measurement matrix $A \in \RR^{m \times n}$ and noise vector $e \in \RR^m$ we observe \[ y = AG(x^*) + e.\]

The scenario of interest is that the number of measurements $m$ is much less than the output dimension $n$, but is slightly more than the latent dimension $k$. Each $W^{(i)}$ is a matrix with dimension $n_i \times n_{i-1}$, such that $n_0 = k$ and $n_d = n$. The noise is arbitrary, and recovery bounds will depend on $\norm{e}$.

The aim of \cite{HV19} is to show that the empirical squared-loss \[ f(x) = \frac{1}{2}\norm{AG(x) - y}_2^2 \]  
has no critical points except the true solution $x^*$, and a rescaled vector $-\rho_d x^*$. Thus, gradient descent would recover $x^*$ up to a global rescaling. This result is strengthened in subsequent work \cite{HHHV18}, but it contains the main ideas, which we outline now. See Section~2 of \cite{HV19} for more details. 

Ignoring issues of non-differentiability, the gradient is \[ \Gradient f(x) = \left(\prod_{i=d}^1 W_{+,x^{(i)}}^{(i)} \right)^T A^T A \left(\left(\prod_{i=d}^1 W_{+,x^{(i)}}^{(i)}\right)x - \left(\prod_{i=d}^1 W_{+,(x^*)^{(i)}}^{(i)}\right)x^*\right) \]
where $x^{(i)} = W^{(i-1)} \cdots W^{(1)} x$ and $(x^*)^{(i)} = W^{(i-1)}\cdots W^{(1)} x^*$. Next, if $A$ satisfies a certain restricted isometry condition, it follows that $A^T A$ is approximately the identity on the range of $G$, so
\[ \Gradient f(x) \approx \left(\prod_{i=d}^1 W_{+,x^{(i)}}^{(i)} \right)^T \left(\left(\prod_{i=d}^1 W_{+,x^{(i)}}^{(i)}\right)x - \left(\prod_{i=d}^1 W_{+,(x^*)^{(i)}}^{(i)}\right)x^*\right). \]

Next, we need to show that this approximation concentrates around its mean. Consider the second term in the difference (the first is no more complicated): \[ (W_{+,x^{(1)}}^{(1)})^T \cdots (W_{+,x^{(d)}}^{(d)})^T W_{+,(x^*)^{(d)}}^{(d)} \cdots W_{+,(x^*)^{(1)}}^{(1)}x. \]
The proof that this product concentrates is by induction on $d$. Each step collapses the innermost pair $(W^{(i)}_{+,x})^TW^{(i)}_{+,y}$ in the product, using the Weight Distribution Condition (which bounds $(W^{(i)}_{+,x})^TW^{(i)}_{+,y} - \EE (W^{(i)}_{+,x})^TW^{(i)}_{+,y}$) to replace the pair with their expectation.

Finally, once the approximation for $\Gradient f(x)$ has been further approximated by its (deterministic) expectation, the expectation is analyzed algebraically.

\section{Extensions}\label{app:extensions}

\subsection{Gaussian noise}

The CS-DGP problem (Compressed Sensing with a Deep Generative Prior) can be modified to the Gaussian noise setting, i.e. the noise vector $e \in \RR^m$ has distribution $e \sim N(0,\sigma^2)^m$. In this setting it has been shown that there is an efficient algorithm estimating $x^*$ up to $\tilde{O}(\sigma^2 k/m)$ (ignoring logarithmic terms and dependence of the depth of the network), so long as the measurement matrix $A$ is random and the weight matrices satisfy the Weight Distribution Condition \cite{HHHV18d}. A corollary was that if the neural network was logarithmically expansive and had Gaussian random weights, then efficient recovery was possible. Our result directly yields an improvement, implying that constant expansion suffices.

\subsection{One-bit recovery}

One-bit recovery with a neural network prior, introduced in \cite{QWY19}, has the following formal statement. Let $G: \RR^k \to \RR^n$ be a neural network of the form \[ G(x) = \relu(W^{(d)}(\dots \relu(W^{(2)}(\relu(W^{(1)}x))) \dots )). \]
Let $x^* \in \RR^k$ be an unknown latent vector. We wish to recover $x^*$ from $m$ one-bit noisy measurements of $G(x^*)$. Specifically, we observe a sign vector \[ y = \text{sign}(AG(x^*) + \xi + \tau) \] where $A \in \RR^{m \times n}$ is a random measurement matrix, $\xi \in \RR^m$ is a noise vector, and $\tau \in \RR^k$ is a random quantization threshold. In this setting, global landscape analysis of the loss function can be performed, and it can be shown that if each weight matrix satisfies the WDC then there are no spurious critical points outside a neighborhood of $x^*$, a neighborhood of some negative scalar multiple of $x^*$, and a neighborhood of $0$ \cite{QWY19}.

Once again, our result implies that the analysis can go through when each weight matrix has i.i.d. Gaussian entries and constant expansion in dimension (whereas previously logarithmic expansion was required).

\subsection{Phase retrieval}

Phase retrieval with a neural network prior, introduced in \cite{HLV18}, has the following formal statement. Let $G: \RR^k \to \RR^n$ be a neural network of the form \[ G(x) = \relu(W^{(d)}(\dots \relu(W^{(2)}(\relu(W^{(1)}x))) \dots )). \]
Let $x^* \in \RR^k$ be an unknown latent vector. We wish to recover $x^*$ from $m$ phaseless noisy measurements of $G(x^*)$. Specifically, we observe \[ y = |AG(x^*)| \] where $A \in \RR^{m \times n}$ is a measurement matrix. As in the prior two examples, global landscape analysis can be performed if each weight matrix satisfies the WDC (and $A$ satisfies a certain isometry condition) \cite{HLV18}. Thus, our contribution extends the analysis to Gaussian matrices with constant expansion.

\subsection{Deconvolutional neural networks}

It is shown in \cite{MAK18} that if $G$ is a two-layer deconvolutional neural network with Gaussian weights and logarithmic expansion in the number of channels, then (under certain other moderate conditions), the empirical risk function is well-behaved. This result again applies the WDC for Gaussian random matrices as a black box, so our result decreases the requisite expansion to a constant.

\subsection{Low-rank matrix recovery}

Low-rank matrix recovery with a neural network prior was recently introduced in \cite{cocola2020nonasymptotic}. Let $G: \RR^k \to \RR^n$ be a neural network of the form \[ G(x) = \relu(W^{(d)}(\dots \relu(W^{(2)}(\relu(W^{(1)}x))) \dots )). \]
Let $x^* \in \RR^k$ be an unknown latent vector. In the spiked Wishart model, we are given a matrix $$Y = uG(x^*)^T  + \sigma \mathcal{Z},$$ where $u \sim N(0,I_N)$ and $\mathcal{Z}$ has i.i.d. $N(0,1)$ entries; that is, $Y$ is an $N \times n$ rank-$1$ matrix perturbed with random noise. In the spiked Wigner model, instead we are given $Y = G(x^*) G(x^*)^T + \sigma \mathcal{H}$ where $\mathcal{H}$ is an $n \times n$ matrix drawn from a Gaussian Orthogonal Ensemble. In either model, we want to recover $x^*$.

It's shown in \cite{cocola2020nonasymptotic} that for both models, if each weight matrix of $G$ satisfies the WDC (and the latent dimension is sufficiently small), then the appropriate loss functions enjoy favorable optimization landscapes. Our result improves the requisite expansion for Gaussian random neural networks to a constant.

\end{document}